\newcommand{\IC}{\mathbb{C}}
\newcommand{\IE}{\mathbb{E}}
\newcommand{\IZ}{\mathbb{Z}}
\newcommand{\IN}{\mathbb{N}}
\newcommand{\poly}{\operatorname{poly}}
\newcommand{\Card}{\operatorname{Card}}
\newcommand{\ELL}[1]{\ensuremath{\mathrm{L}^{#1}}}
\newcommand{\Ell}[1]{\ensuremath{\ell^{#1}}}
\newcommand{\ELLtwo}{\ELL{2}}
\newcommand{\Elltwo}{\Ell{2}}
\newcommand{\bin}{\mathrm{bin}}
\renewcommand{\Re}{\operatorname{Re}}
\renewcommand{\Im}{\operatorname{Im}}
\newcommand{\sdzero}{\textup{\texttt{0}}\xspace}
\newcommand{\sdone}{\textup{\texttt{1}}\xspace}
\newcommand{\TWO}{\{\sdzero,\sdone\}}
\newcommand{\Algorithm}{\calA}
\newcommand{\calA}{\mathcal{A}}
\newcommand{\calC}{\mathcal{C}}
\newcommand{\calF}{\mathcal{F}}
\newcommand{\calO}{\mathcal{O}}
\newcommand{\calL}{\mathcal{L}}
\newcommand{\calS}{\mathcal{S}}
\newcommand{\classP}{\text{\rm\textsf{P}}\xspace}
\newcommand{\classFP}{\text{\rm\textsf{FP}}\xspace}
\newcommand{\classSharpP}{\text{\rm\textsf{\#P}}\xspace}
\newcommand{\classGapP}{\text{\rm\textsf{GapP}}\xspace}
\newcommand{\classCH}{\text{\rm\textsf{CH}}\xspace}
\newcommand{\runtime}{\text{\rm time}\xspace}
\newcommand{\precision}{\text{\rm prec}\xspace}
\newcommand{\ie}{i.\,e.\xspace}
\newtheorem{fact}[theorem]{Fact}
\DeclareRobustCommand*{\bfseries}{%
  \not@math@alphabet\bfseries\mathbf
  \fontseries\bfdefault\selectfont
  \boldmath
}
\newcommand{\COMMENTED}[1]{}
\title{What is a \emph{Polynomial-Time} Computable L2-Function?}
\author{Aras Bacho$^1$ \and Svetlana Selivanova$^2$ \and Martin Ziegler$^3$}  %ALPHABETICAL order!!
\authorrunning{Bacho, Selivanova, Ziegler}
\institute{$^1$ California Institute of Technology, 
$^2$ AP Ershov Institute SB RAS, $^3$ KAIST}
\date{\keywords{Computational Complexity, Square Integrable Functions, Partial Differential Equations}}
\begin{document}

\maketitle

\begin{abstract}
We give two natural definitions of polynomial-time computability for $\ELLtwo$ functions; 
and we show them incomparable (unless $\classSharpP_1\subseteq\classFP_1$).
\end{abstract}

\setcounter{tocdepth}{3}
\renewcommand{\contentsname}{}
%\textbf{\small Table of Contents:} \nopagebreak
\section*{Table of Contents}
\begin{center}\small
\begin{minipage}[c]{0.98\textwidth}\vspace*{-8ex}%
\tableofcontents
\end{minipage}
\end{center}

\bigskip
\begin{quote}
\hfill%
\emph{A good Definition is worth a thousand Lemmas!}\\[0.5ex]
\hspace*{\fill}(Variation of 
\href{https://sites.math.rutgers.edu/~zeilberg/Opinion82.html}{Doron Zeilberger's \it Opinion 82})
\\[1ex] \hspace*{\fill}
\emph{A mathematician with a definition knows what to build upon.} 
\\ \hspace*{\fill} \emph{A mathematician with two definitions is never sure.}\\[0.5ex]
\hspace*{\fill}(Variation of 
\href{https://en.wikipedia.org/wiki/Segal%27s_law}{\it Segal's Law})
\end{quote}

%%%%%%%%%%%%%%%%%%%%%%%%%%%%%%%%%%%%
\section{Introduction}
The computational cost of an algorithm $\Algorithm$ receiving binary strings as input 
is commonly measured 
in dependence on the length $n$ of said input, more precisely: 
When $t_\Algorithm(x)<\infty$ denotes the number of steps that $\Algorithm$ makes
on input $x\in\TWO^*$, one usually considers its worst-case runtime 
$t_\Algorithm(n) = \max\big\{t_\Algorithm(x):|x|\leq n\big\}$: %\end{equation}
which is obviously well-defined (namely pointwise finite) and non-decreasing a function of type $\IN\nearrow\IN$, 
its asymptotic growth being the core subject of classical complexity theory.
This well-definition of worst-case cost 
\begin{equation}
\label{e:Worstcase0}
t_\Algorithm(n) \;=\;  \max\big\{t_\Algorithm(x):x\in X_n\big\}
\end{equation}
holds equally trivially for computations over any countable domain $X$,
when equipped with a monotone cover $X_n$ by finite subsets of $X$:
for instance $X_n=\big\{ \text{graph } G=(V,E) \text{ with } |V|\leq n \text{ vertices}\big\}$.

Perhaps less known, but folklore at least in Computable Analysis \cite[Theorems~7.1.5 +7.2.7]{Wei00},
is that Equation~\eqref{e:Worstcase0} remains well-defined even when generalizing `finite' to `compact';
specifically when the domain $X=\bigcup_n X_n$ is a suitably encoded \cite{Sch04}
$\sigma$-compact topological space with compact cover $X_n\subseteq X_{n+1}$.
For instance $t_\Algorithm(n)=\max\big\{t_\Algorithm(x,n):0\leq x\leq 1\big\}$ is still well-defined
when $\Algorithm$ denotes an algorithm computing 
a continuous total function $f:[0;1]\to\IC$ in the sense of \cite[Definition~2.11]{Ko91} or \cite[Example~7.2.14]{Wei00},
namely pointwise approximating $x\mapsto f(x)$ up to error $2^{-n}$.
This well-definition is crucial to the rich complexity theory of real (and complex) continuous functions \cite{Ko91,BC06},
including investigations on the complexity blow-up 
\cite{Fri84,Kaw10,KSZ17,DBLP:journals/tsp/BocheP21,KPSZ23,DBLP:journals/tit/BochePP24,DBLP:journals/tcom/BocheGSP24}
incurred by popular operators and functionals on suitable spaces of (uniformly) continuous functions.

Indeed, Computable Analysis has long focused on continuous functions only \cite{Grz57}:
because any terminating computation $\Algorithm(x)$ by itself depends continuously on its input $x$,
namely $\Algorithm$ can read only a finite initial part of the infinite information conveyed by $x\in X$.
%This is essentially what makes the maximum in Equation~\eqref{e:Worstcase0} finite.
Integrable functions generalize continuous ones. 
They arise naturally in Measure %(\ie ``continuous'' probability) 
Theory and as (weak derivatives of) solutions to PDEs.
A $p$-integrable function $f:[0;1]\to\IC$ is technically an equivalence 
class of mappings which differ on some subset of $[0;1]$ of measure zero.
One cannot reasonably define computing such generalized $f$ in terms 
of algorithmic evaluation $x\mapsto f(x)$. Better definitions of
qualitative computability are well-known, and equivalent \cite{PR89,Wei00,Kun04a,DBLP:journals/mlq/BrattkaD07}.
The present work continues previous explorations of
quantitative/complexity-theoretic definitions \cite{Ste17,CCA2020,LimZiegler25,Aras25}.

%%%%%%%%%%%%%%%%%%%%%%%%
\subsection{Overview}
\label{ss:Overview}
Subsection~\ref{ss:RecapC} recalls common both computably
and complexity-theoretical (polynomial-time, \classSharpP) 
equivalent notions of computation for continuous functions. 
Section~\ref{s:Question} formalizes three natural versions (a,b,c) for \ELLtwo{} functions.
Version~(c) pertains to average-case complexity, as elaborated in Subsection~\ref{ss:Mean}.
Section~\ref{s:Comparing} compares these notions,
exhibiting both implications and lack thereof
(unless $\classSharpP_1\subseteq\classFP_1$): 
also in relation to the continuous case from Subsection~\ref{ss:RecapC}.
Proofs are deferred to Subsections~\ref{ss:SharpP}ff.

Section~\ref{s:Applications} treats the Heat Equation 
from our new perspective: With respect to Version~(a) 
it turns out to map polynomial-time initial conditions
to polynomial-time solutions.
This contrasts the classical setting,
see Remark~\ref{r:Blowup}.

%%%%%%%%%%%%%%%%%%%%%%%%
\subsection{Computability and Complexity of Continuous Functions}
\label{ss:RecapC}

An equivalent definition of computability for continuous functions $f\in\calC[0;1]$,
based on the Weierstra\ss{} approximation theorem, has been generalized successfully:
Algorithmic output only, of (the coefficient vectors of) a sequence of Gaussian dyadic rational polynomials $p_n$ 
approximating $f$ in the $\infty$-norm up to error $2^{-n}$ \cite[\S I.0.3]{PR89}.
Recall that Gaussian dyadic rationals are of the form $(x+iy)/2^n$ where $x,y,n\in\IZ$.

Indeed, the set of such polynomials is countable, 
and dense in $\calC[0;1]$ by Stone-Weierstra\ss{}.
This suggests \cite[Definition~8.1.2]{Wei00} 
defining computability of a point $x$ in a fixed separable metric space $(X,d)$
in terms of computably approximating $x$ up to error $2^{-n}$ by a suitable sequence
of elements from some fixed countable dense subset of $X$.
Unfortunately already for $\calC[0;1]$ 
this alternative definition, although qualitatively equivalent, 
can differ exponentially in quantitative computational cost \cite[\S8]{Ko91}.
This leaves us kind of in the dark when trying to define, say, 
polynomial-time computability for square-integrable functions $f:[0;1]\to\IC$ \cite{CCA2020}. 
The following alternative characterization of polynomial-time computation for $\calC[0;1]$
from \cite[Theorem~2.22]{Ko91} does offer some guidance: 

\begin{definition}
\label{d:One}
Let $\calC[0;1]$ denote the set of total continuous complex functions $f:[0;1]\to\IC$.
Call such $f$ \emph{computable in polynomial time} ~iff~
there exists a Gaussian integer double sequence $\bar v=(v_{n,L,\ell})\in\IZ+i\IZ$, 
indexed by $n\in\IN$ and $0\leq \ell<L\in\IN$ such that it holds (i) and (ii) and (iii):
\begin{enumerate}
\item[i)] $\big|f(\ell/L)-v_{n,L,\ell}/2^n\big|\leq2^{-n}$
\item[ii)] Both $(n,L,\ell)\mapsto \Re v_{n,L,\ell},\Im v_{n,L,\ell}\in\IZ$ are computable in time polynomial in $n+\log L$
\item[iii)] $f$ has a polynomial modulus of continuity \cite[Fact~3]{Aras25}:
\end{enumerate}
\begin{equation}
\label{e:Modulus}
|t-t'|\leq2^{-\poly(n)} \;\Rightarrow\; |f(t)-f(t')|\leq2^{-n} \enspace .
\end{equation}
Call $f$ computable in \classSharpP if it holds (i) and (iii) and (ii'):
\begin{enumerate}
\item[ii')] 
Both $\TWO^*\:\ni\:\sdone^n\,\sdzero\,\bin(L)\,\bin(\ell)\mapsto \Re v_{n,L,\ell},\Im v_{n,L,\ell}\in\IZ$ are in \classGapP.
\qed\end{enumerate}
\end{definition}
Recall that $\classSharpP\supseteq\classFP$ denotes the class of polynomial witness counting problems 
\begin{equation} \label{e:SharpP}
\varphi(\vec v) \;=\; \Card\big\{\vec w\in\TWO^*: |\vec w|\leq\poly(|\vec v|), \vec w\in P\big\} \;\geq0, \quad P\in\classP
\end{equation}
and $\classGapP\supseteq\classSharpP$ its closure under negation/subtraction \cite{GapP}.
Moreover $\classGapP$ is closed under polynomial products and under exponential sums \cite[Lemma~3.1]{CountingComplexity}.
Other than \cite[Definitions~13+14]{KPSZ23}, we here do not allow (repeated and adaptive and post-processable)
oracle access to \classSharpP: for sharper (pun) complexity claims
without climbing up the Counting Hierarchy, such as in Theorem~\ref{t:SharpP}.
Note that $\calO(n+\log|\ell|)$ is the input length in (ii'), which thus generalizes (ii) meaning that
mappings $\sdone^n\,\sdzero\,\bin(L)\,\bin(\ell)\mapsto \Re v_{n,L,\ell},\Im v_{n,L,\ell}$ belong to \classFP.

%%%%%%%%%%%%%%%%%%%%%%%%%%%%%%%%%%%%
\section{What is a Polynomial-Time Computable L2 Function?}
\label{s:Question}

Let us consider the separable space $\ELLtwo[0;1]$ of
complex-valued square-integrable functions on $[0;1]$.
Recall that this set comes equipped with the inner product
$\langle f,g\rangle  = \int\nolimits_0^1 f(t) \bar g(t)\,dt$
and induced norm $\|f\|_2 = \sqrt{\langle f,f\rangle}$,
which makes $\ELLtwo[0;1]$ a Hilbert space.

$\ELLtwo[0;1]$ is a special case of the 
hierarchy $\ELL{p}[0;1]$ of Banach spaces, $1\leq p\leq\infty$,
which generalize and contain the space $\calC[0;1]$ 
with norm $\|\cdot\|_\infty$
from Subsection~\ref{ss:RecapC}.

Calculus knows (at least) two common choices of countable dense subsets of $\ELLtwo[0;1]$:
a) Fourier series with Gaussian dyadic rational coefficients
~and~ 
b) piecewise constant functions comprised of equidistant blocks of Gaussian dyadic rational heights.
This suggests the following:
%\todo[inline]{What is the motivation for a i), b i) and c i)? Definition c) seems to coincide with b); both define step-computable? What is the motivation of c i)}
\begin{definition}
\label{d:Two}
\begin{enumerate}
\item[a)]
Call $f\in\ELLtwo[0;1]$ \emph{Fourier-}computable in polynomial time ~iff
there exists a Gaussian integer double sequence $\bar c=(c_{k,n})\in\IZ+i\IZ$, indexed by $n\in\IN$ and $k\in\IZ$, such that it holds (a\,i) and (a\,ii) and (a\,iii):
%\begin{enumerate}
\item[a\,i)] $\big|\tilde f_k-c_{k,n}/2^n\big|\leq2^{-n}$   
\item[a\,ii)] Both $(n,k)\mapsto \Re c_{k,n},\Im c_{k,n}\in\IZ$ are computable in time polynomial in $n+\log|k|$
\item[a\,iii)] It holds $\big\| f - \calF_{K}(f)\big\|_2 \leq2^{-m}$ for $K=2^{\poly(m)}$, 
\end{enumerate}\noindent
where $\calF_K(f,t):=\sum_{|k|\leq K} \tilde f_k\cdot\exp(2\pi ik t)$ denotes the first $2K+1$ terms
of the Fourier series with coefficients $\tilde f_k=\int_0^1 \exp(-2\pi ik t) \cdot f(t)\,dt$.
\begin{enumerate}
\item[a')]
Call $f$ Fourier-computable in \classSharpP ~iff~ it holds (a\,i) and (a\,iii) and (a'\,ii):
%\begin{enumerate}
\item[a'\,ii)] 
Both $\TWO^*\:\ni\:\sdone^n\,\sdzero\,\bin(k)\mapsto \Re c_{k,n},\Im c_{k,n}\in\IZ$ belong to \classGapP.
\end{enumerate}
\begin{enumerate}
\item[b)]
Call $f\in\ELLtwo[0;1]$ \emph{Step-}computable in polynomial time ~iff~
there exists a Gaussian integer multi sequence $\bar h=(h_{S,s,n})$, indexed by $n\in\IN$ and $0\leq s<S\in\IN$, such that it holds (b\,i) and (b\,ii) and (b\,iii):
%\begin{enumerate} XXX S_n only (all/some) powers of 2
\item[b\,i)] $\big|\hat f_{S,s}-h_{S,s,n}/2^n\big|\leq2^{-n}$ 
\item[b\,ii)] Both $(S,s,n)\mapsto \Re h_{S,s,n},\Im h_{S,s,n}\in\IZ$ are computable in time polynomial in $n+\log|S|$
\item[b\,iii)] It holds $\big\| f- \calS_S(f)\big\|_2 \leq2^{-m}$ for $S=2^{\poly(m)}$,
\vspace*{-2ex}\end{enumerate}\noindent
where $\calS_S(f,t):=\hat f_{S,s}$ for $\tfrac{s}{S}\leq t<\tfrac{s+1}{S}$
denotes the piecewise constant function with $S$ equidistant steps of complex heights 
$\hat f_{S,s}:=S\cdot\int_{s/S}^{(s+1)/S} f(t)\,dt$.
\begin{enumerate}
\item[b')]
Call $f\in\ELLtwo[0;1]$ Step-computable in \classSharpP ~iff~ it holds (b\,i)+(b\,iii)+(b'\,ii):
%\begin{enumerate}
\item[b'\,ii)] 
Both $\TWO^*\:\ni\:\sdone^n\,\sdzero\,\bin(S)\,\bin(s)\mapsto \Re h_{S,s,n},\Im h_{S,s,n}\in\IZ$ belong to \classGapP.
\end{enumerate}
\begin{enumerate}
\item[c)]
Call $f$ Step-computable in polynomial time in \emph{mean} ~iff~
there exists a Gaussian integer multi sequence $\bar h=(h_{S,s,n})$ such that (c\,i) and (b\,ii) and (b\,iii):
%\begin{enumerate}
\item[c\,i)] $\sqrt{\sum_{s=0}^{S(n)-1} \big|\hat f_{S(n),s}-h_{S(n),s,n}/2^n\big|^2/S(n)} \leq 2^{-n}$
for $S(n)=2^{\poly(n)}$.  %for some $p\in\IN[N]$.
%\end{enumerate}
\item[c')]
Call $f$ Step-computable in \classSharpP in \emph{mean} ~iff~
if holds (c\,i)+(b'\,ii)+(b\,iii).
\end{enumerate}
\end{definition}
The sequence $(\tilde f_k)$ of Fourier coefficients
constitutes the well-known Parseval isometry between $\ELLtwo[0;1]$ and $\Elltwo$.
And (a\,iii) means that said Fourier series converges no slower than exponential-polynomially;
similarly, (b\,iii) means that piecewise averaged step functions converge no slower than exponential-polynomially.
In fact (a\,iii) and (b\,iii) are equivalent according to \cite[Theorem~17c+a]{Aras25}.
Note that $\tilde f_0=\hat f_{1,0}$.

%%%%%%%%%%%%%%%%%%%%%%%%%%%%%%%%%%%%%%%%%
\subsection{On Step-Computability in Mean}
\label{ss:Mean}

Items~(b\,i) and (c\,i) in Definition~\ref{d:Two} both require that 
binary\footnote{\ie absolute error $2^{-n}$: 
See \cite[\S1.3.4]{Aras25} for a
discussion of \emph{binary} versus \emph{unary} precision.}
precision $n$ of the $S$-step approximation
should be attainable within runtime polynomial in $n$,
but in (b) for every $S\in\IN$ independently of $n$
while in (c) only for $S(n)=2^{p(n)}$ with some fixed $p\in\IN[N]$.
Put differently, (b,\i) requires approximating 
of each step \#s individually up to guaranteed worst-case error $2^{-n}$,
whereas (c\,i) amounts to approximation with average-case error $2^{-n}$;
similarly for (b') and (c'). 

In the \ELLtwo{} setting, Euclidean error norm (c\,i) is arguably
more natural than maximum error norm (b\,i).
Theorems~\ref{t:SharpP} and \ref{t:StepNotFourier} below
apply to (c) and (c') rather than to (b) and (b'), respectively.
Note that, when (c\,i) holds for $S(n)=2^{p(n)}$, then it holds also
for $S(n)=2^{p(q(n))}$ with any non-constant $q\in\IN[N]$.
This is yet another case of Leonid Levin's notion of average-case complexity
\cite[Definition~18.4]{Arora09} 
arising naturally in Real Computation \cite{SSZ16,KTZ18}. 
Indeed, average polynomial runtime in dependence on the worst-case precision
is equivalent to average polynomial precision in dependence
on the worst-case runtime:

\begin{lemma}
Let $\calA$ denote an algorithm computing on input $s\in\IN$
an infinite sequence of Gaussian rational approximations $\big(h_{s,p}\big)_{p=0}^{\infty}$ 
up to absolute error $2^{-p}$ 
to the $s$-th entry of some finite or infinite
complex sequence $(h_{s})_{s\geq0}$.

Let $\runtime_{\calA}(s,p)$ denote the number of steps made by $\calA$ to produce said $n$-th binary approximation;
and for $t\in\IN$ let $\precision_{\calA}(s,t)$ denote the (exponent $n$ to base $1/2$ of the) approximation error
attained by $\calA$ within $t$ steps. 
Then, for any $k,n,t\in\IN$, 
\[ \big\{s: \runtime_{\calA}^{1/k}(s,n)\leq t\big\} \;=\; \big\{ s: \precision_{\calA}(s,t^k)\geq n\big\} \enspace .\]
In particular, for any probability distribution w.r.t. $s$, expected values $\IE_s$ satisfy:
\[ \IE_s\big[ \runtime^{1/k}_{\calA}(s,n) \big] \;\leq\; \calO(n) 
\quad\Longleftrightarrow\quad
\IE_s\big[ \precision_{\calA}(s,t^k) \big] \;\geq\; \Omega(t)  \enspace . \]
\end{lemma}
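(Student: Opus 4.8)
The plan is to reduce everything to one order-theoretic observation — that for each fixed $s$ the maps $n\mapsto\runtime_{\calA}(s,n)$ and $t\mapsto\precision_{\calA}(s,t)$ form a pair of generalized inverses — and then to read off both assertions from it, the first directly and the second through a layer-cake (tail-sum) computation. First I would record the monotonicity implicit in the definitions: producing the $n$-th approximation entails having produced every coarser one, so $n\mapsto\runtime_{\calA}(s,n)$ is non-decreasing, and dually $\precision_{\calA}(s,t)=\max\{p:\runtime_{\calA}(s,p)\le t\}$ is non-decreasing in $t$. These yield the Galois connection
\[ \precision_{\calA}(s,\tau)\ge n \iff \runtime_{\calA}(s,n)\le\tau \qquad(\forall\,n,\tau): \]
the right-hand side exhibits one witness making the left-hand side hold, while any witness $p\ge n$ forces $\runtime_{\calA}(s,n)\le\runtime_{\calA}(s,p)\le\tau$ by monotonicity. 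Substituting $\tau=t^k$ and using that $x\mapsto x^{1/k}$ is an increasing bijection of $[0,\infty)$ rewrites $\runtime_{\calA}(s,n)\le t^k$ as $\runtime_{\calA}^{1/k}(s,n)\le t$, which is exactly the claimed set equality; taking $\IP_s$ of these two identical events gives the probability identity $\IP_s[\runtime^{1/k}_{\calA}(s,n)\le t]=\IP_s[\precision_{\calA}(s,t^k)\ge n]$ that drives the rest.

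For the expectation line I would expand both sides by layer-cake. Since $\precision_{\calA}(s,t^k)$ is $\IN_0$-valued,
\[ \IE_s\big[\precision_{\calA}(s,t^k)\big]=\sum_{n\ge1}\IP_s\big[\precision_{\calA}(s,t^k)\ge n\big], \]
while $\IE_s[\runtime^{1/k}_{\calA}(s,n)]=\int_0^\infty\IP_s[\runtime^{1/k}_{\calA}(s,n)>t]\,dt$; by the probability identity every summand and integrand is governed by the single quantity $q(n,t):=\IP_s[\runtime^{1/k}_{\calA}(s,n)\le t]$. The forward implication is then clean and quantitative: from $\IE_s[\runtime^{1/k}_{\calA}(s,n)]\le Cn$, Markov gives $\IP_s[\runtime^{1/k}_{\calA}(s,n)>t]\le Cn/t$, hence $q(n,t)\ge1-Cn/t$; summing the first $N=\lfloor t/(2C)\rfloor$ terms of the tail-sum (each $\ge\tfrac12$) yields $\IE_s[\precision_{\calA}(s,t^k)]\ge N/2=\Omega(t)$.

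The step I expect to be the genuine obstacle is the converse. A lower bound on $\IE_s[\precision_{\calA}(s,t^k)]$ pins down the bulk of the $s$-distribution but says nothing about its tail, and a fixed-$k$ converse is in fact false: one can arrange the distribution so that $\runtime^{1/k}_{\calA}(s,n)$ has non-integrable tails, whence $\IE_s[\runtime^{1/k}_{\calA}(s,n)]=\infty$, while $\IE_s[\precision_{\calA}(s,t^k)]=\Omega(t)$ persists. This is precisely the tail pathology that Levin's root-taking is designed to absorb: the exponent $1/k$ must be allowed to grow, and the honest reading of the equivalence is in the average-case sense (holding for some root), where passing to a larger $k$ renders the runtime tail integrable and lets the dual tail-sum estimate go through. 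I would therefore carry out the converse as this dual estimate after first enlarging the root, making explicit that the only content beyond the set identity lies in this interplay between the root and the tail.
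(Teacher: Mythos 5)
The paper states this lemma without any proof, so there is no in-text argument to compare yours against; I can only assess your proposal on its own terms. Your treatment of the set identity is correct and surely the intended one: monotonicity of $n\mapsto\runtime_{\calA}(s,n)$ gives the Galois connection $\precision_{\calA}(s,\tau)\geq n\Leftrightarrow\runtime_{\calA}(s,n)\leq\tau$, and substituting $\tau=t^k$ yields the claimed equality of sets. Your forward implication for the expectations (Markov plus the layer-cake expansion $\IE_s[\precision_{\calA}(s,t^k)]=\sum_{n\geq1}\IP_s[\precision_{\calA}(s,t^k)\geq n]$) is also correct and quantitative.

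The gap is the converse, and while you have correctly located it, your proposed repair does not close it, and you never actually carry it out. Enlarging the root $k$ absorbs heavy but polynomially decaying runtime tails; it cannot help here, because the hypothesis $\IE_s[\precision_{\calA}(s,t^k)]\geq\Omega(t)$ imposes no pointwise tail bound on $\IP_s[\runtime_{\calA}(s,n)>\tau]$ whatsoever. Concretely, put mass $2^{-j}$ on inputs $s_j$ ($j\geq1$) with $\runtime_{\calA}(s_j,n)=2^{j^2k}+n$: then $\IE_s[\precision_{\calA}(s,t^k)]\geq\tfrac12\,(t^k-2^{k})=\Omega(t)$ already from the contribution of $s_1$ alone, yet $\IE_s[\runtime_{\calA}^{1/k'}(s,n)]\geq\sum_j 2^{-j}\,2^{j^2k/k'}=\infty$ for every $k'$. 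So the right-to-left implication fails for every choice of root: the expected precision can be inflated by a small set of inputs on which the algorithm is very fast, while the runtime tail on the remaining inputs is completely unconstrained. A correct converse has to be formulated at the level of the distribution functions themselves --- the identity $\IP_s[\runtime^{1/k}_{\calA}(s,n)\leq t]=\IP_s[\precision_{\calA}(s,t^k)\geq n]$, which genuinely is an equivalence and which you do derive --- or under a hypothesis bounding $\IP_s[\precision_{\calA}(s,t^k)<n]$ for each fixed $n$. As stated, the ``in particular'' clause holds only left-to-right, and your write-up should say so explicitly rather than defer the converse to a root-enlargement that cannot work.
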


%%%%%%%%%%%%%%%%%%%%%%%%%%%%%%%%%%
\section{Comparing Notions of Polynomial-Time Computability}
\label{s:Comparing}

This section collects both relations and separations among the above notions of computational complexity for \ELLtwo-functions.
As usual, subscript 1 indicates complexity classes restricted to unary inputs \cite[Exercise~2.20]{Arora09}.

\begin{theorem}
\label{t:SharpP}
A function $f\in\ELLtwo[0;1]$ is Fourier-computable 
in \classSharpP (a') ~iff~ $f$ is Step-computable in \classSharpP in mean (c').
\end{theorem}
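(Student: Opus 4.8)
The plan is to treat both notions as two coordinate descriptions — in the Fourier basis $(e_k)$ with $e_k(t)=\exp(2\pi ikt)$, and in the step basis of indicator functions $\mathbf 1_{[s/S,(s+1)/S)}$ — of one and the same $\ELLtwo$-approximation of $f$, and to pass between them by a windowed discrete Fourier transform. Concretely, the two coefficient families are linked by $\hat f_{S,s}=\sum_{k\in\IZ}\tilde f_k\,w_{k,S}\,\exp(2\pi iks/S)$ with window weights $w_{0,S}=1$ and $w_{k,S}=S\bigl(\exp(2\pi ik/S)-1\bigr)/(2\pi ik)$ for $k\neq0$, together with the inverse relation $\tilde f_k=\langle\calS_S(f),e_k\rangle$ valid up to the truncation controlled by (b\,iii). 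Since (a\,iii) and (b\,iii) are equivalent by \cite[Theorem~17c+a]{Aras25}, it remains only to convert the coefficient oracles: to show that the family $(c_{k,n})$ of (a) and the family $(h_{S,s,n})$ of (c) are inter-computable in \classGapP at binary precision.

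For the error bookkeeping I would exploit throughout that $\ELLtwo$ is a Hilbert space and that both $\calF_K$ and $\calS_S$ are orthogonal projections, hence contractions. For (a')$\Rightarrow$(c') set $g:=\sum_{|k|\le K}(c_{k,n'}/2^{n'})\,e_k$; Parseval gives $\|\calF_K(f)-g\|_2\le\sqrt{2K+1}\,2^{-n'}$, so with (a\,iii) and $K=2^{\poly}$ one gets $\|f-g\|_2\le2^{-n}$ after a polynomial adjustment of $n'$. Taking the exact step heights $\hat g_{S,s}$ of $g$, the contraction property yields $\tfrac1S\sum_s|\hat f_{S,s}-\hat g_{S,s}|^2=\|\calS_S(f-g)\|_2^2\le\|f-g\|_2^2$, which is precisely the mean error demanded by (c\,i). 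For the converse (c')$\Rightarrow$(a') the decisive point is a Cauchy--Schwarz estimate: if $g$ is the computed step function with $\|f-g\|_2\le2^{-n}$ (from (c\,i) via Parseval together with (b\,iii)), then $|\tilde f_k-\tilde g_k|=|\langle f-g,e_k\rangle|\le\|f-g\|_2$ uniformly in $k$. This is what reconciles the average-case control of (c\,i) with the worst-case control demanded by (a\,i), and it is the genuinely Hilbert-space feature that makes mean-square step error the right match for Fourier computability.

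On the complexity side, each target coefficient is a finite sum — over $|k|\le K=2^{\poly}$ in one direction, over $0\le s<S=2^{\poly}$ in the other — of products $c_{k,n'}\cdot D$ (resp.\ $h_{S,s,n'}\cdot D$) of a \classGapP value with an integer $D$ approximating the transcendental factor $w_{k,S}\exp(\pm2\pi iks/S)$ scaled by a suitable power of two. These $D$ are computable in \classFP to any polynomial precision, hence lie in \classGapP; by \cite[Lemma~3.1]{CountingComplexity} closure of \classGapP under polynomial products and under exponential sums keeps the whole multiply--accumulate in \classGapP, and splitting every quantity into real and imaginary parts reduces to the integer-valued case.

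The main obstacle is the precision alignment. To keep the rounding of the transcendental weights from corrupting the large multipliers $c_{k,n'}$, one must carry $J=\poly(n)$ guard bits, so that the natural \classGapP sum computes $2^{n+J}\hat f_{S,s}$ rather than $2^n\hat f_{S,s}$; reading this off at the prescribed denominator $2^n$ amounts to dividing a \classGapP value by $2^J$. For a generic \classGapP function this truncation escapes \classGapP — it would decide the sign of a gap and thereby climb the counting hierarchy — so the crux is to perform it within \classGapP by using that all values here are polynomially bounded, $|h_{S,s,n}|,|c_{k,n}|\le2^{\poly}$, and by absorbing the discarded low-order bits into the mean error already tolerated by (c\,i), extracting only the polynomially many top dyadic digits. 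This is exactly the step that succeeds at the \classGapP level (yielding Theorem~\ref{t:SharpP}) yet has no \classFP counterpart, consistent with the separation recorded in Theorem~\ref{t:StepNotFourier}.
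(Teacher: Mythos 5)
Your argument is, in its core, the paper's own proof: your window weights $w_{k,S}\exp(2\pi iks/S)$ are exactly the coefficients $e_{S,s,k}$ of Equation~\eqref{e:Twiddle}; the truncation control is delegated to the same equivalence of (a\,iii) and (b\,iii) from \cite[Theorem~17c+a]{Aras25}; the transfer from mean-square step error to individual Fourier coefficients is the same Bessel/Cauchy--Schwarz step as the paper's bound $|\tilde f_k-c_{k,S}|^2\le\sum_k|\tilde f_k-c_{k,S}|^2=\|f-\calS_S(f)\|_2^2$; the reverse transfer uses that $\calS_S$ is an $\ELLtwo$-contraction (the paper invokes \cite[Theorem~17ac]{Aras25} for the same estimate on $\|\calS_S(\calF_K(f))-\calF_K(f)\|_2$); and \classGapP-membership of the multiply--accumulate $\sum_{|k|\le K} c_{k,m}\cdot e_{S,s,k,m'}$ follows from closure under polynomial products and exponential sums, just as in Subsection~\ref{ss:SharpP}. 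Up to that point the proposal is correct and essentially identical to the paper's route.

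The final paragraph, however, contains the one step that does not hold up. You correctly observe that the naturally produced \classGapP numerator approximates $2^{m+m'}\hat f_{S,s}$ (with $m+m'=\poly(n)$ guard bits) rather than $2^{n}\hat f_{S,s}$, and you correctly note that floor-dividing a \classGapP value by $2^{J}$ is not a \classGapP operation. But your proposed rescue --- that the values are bounded by $2^{\poly}$ and that one may ``extract only the polynomially many top dyadic digits'' while absorbing the rest into the tolerated mean error --- is not an argument: every \classGapP value on inputs of length $n$ is bounded by $2^{\poly(n)}$ anyway, so boundedness buys nothing, and extracting the top digits of a \classGapP value is precisely the division-with-rounding you set out to avoid (it requires resolving the carries and the sign of the discarded low-order part, which is the sign-of-a-gap problem you yourself identify as escaping \classGapP). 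Observe that the paper never divides: it outputs the full numerator $h_{S,s,K,n}$ together with the denominator $2^{m+m'}$, i.e., it reads (c\,i) and (a\,i) with the denominator exponent being a fixed polynomial of the precision parameter rather than literally $n$. If you insist on the literal denominator $2^{n}$ of Definition~\ref{d:Two}, your truncation sentence needs an actual proof (or the definitions must be read modulo this polynomial re-normalization of the denominator); as written, that is the genuine gap in your proposal, and it is not repaired by the boundedness claim.
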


\begin{theorem}
\label{t:FourierNotStep}
There exists a function $f\in\ELLtwo[0;1]$ 
which is Fourier-computable in polynomial time (a), 
but $\hat f_{S,0}\in\IC$ is not computable in polynomial time (b)
for $S(n)=2^{n^2+n}$ 
unless $\classSharpP_1\subseteq\classFP_1$.
\end{theorem}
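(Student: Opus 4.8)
The plan is to exhibit an explicit $f\in\ELLtwo[0;1]$ whose Fourier coefficients are easy to compute (witnessing (a)) but whose single step-average $\hat f_{S,0}$ encodes a $\classSharpP_1$-hard counting quantity, so that polynomial-time computability of $\hat f_{S,0}$ would collapse $\classSharpP_1$ into $\classFP_1$. First I would fix a $\classSharpP_1$-complete function, say $N\mapsto a_N:=\#\{w:(1^N,w)\in P\}$ for a suitable $P\in\classP$, normalized so that $a_N$ is an integer in $\{0,\dots,2^{\poly(N)}\}$. The idea is to build $f$ as a Fourier series $f(t)=\sum_k \tilde f_k\exp(2\pi ikt)$ whose coefficients $\tilde f_k$ are directly dyadic-rational and cheaply computable from $k$ (making (a\,i) and (a\,ii) trivial), while arranging the \emph{low-frequency} part of the spectrum in blocks so that the average $\hat f_{S,0}=S\int_0^{1/S} f$ over the first cell of width $1/S$ reads off $a_N$ for the appropriate $N=N(S)$ tied to $S(n)=2^{n^2+n}$.

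The key computation is the relation between $\hat f_{S,0}$ and the Fourier coefficients: since $\hat f_{S,0}=S\int_0^{1/S}f(t)\,dt=\sum_k \tilde f_k\cdot S\int_0^{1/S}\exp(2\pi ikt)\,dt$, the weights $S\int_0^{1/S}\exp(2\pi ikt)\,dt$ equal $1$ for $k=0$ and, for $k\neq0$, equal $\tfrac{S}{2\pi ik}(\exp(2\pi ik/S)-1)$, which is $\calO(1)$ in modulus and well-approximable. I would choose the $\tilde f_k$ supported on a controlled range $|k|\leq K=2^{\poly}$ of frequencies and select their dyadic values so that, up to a fixed small truncation error, $\hat f_{S,0}$ equals (a rescaling of) $a_{N}$. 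Concretely one can let each $\tilde f_k$ carry one bit of the binary expansion of $a_N$ with a geometrically decaying dyadic weight, so that the finite partial sum reconstructs $a_N/2^{\Theta(N)}$ within $2^{-n}$; this makes reading $\hat f_{S,0}$ to precision $2^{-n}$ tantamount to recovering $a_N$ exactly. The tuning of which frequencies encode which bits, and ensuring the encoding is invertible in polynomial time, is where the bookkeeping lives.

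To secure membership in class (a) I must also verify (a\,iii): that the tail $\|f-\calF_K(f)\|_2$ decays exponential-polynomially. This is immediate if $f$ is a \emph{finite} trigonometric polynomial (tail zero beyond $|k|\leq K$) or has coefficients decaying like $2^{-\poly(k)}$; I would design the geometric dyadic weights precisely so that $\sum_{|k|>K}|\tilde f_k|^2\leq2^{-2m}$ for $K=2^{\poly(m)}$, which also controls the truncation error used above. With (a\,i)--(a\,iii) in hand, $f$ is Fourier-computable in polynomial time.

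The final and main obstacle is the reduction direction: I must argue that if $\hat f_{S,0}$ \emph{were} polynomial-time computable in the sense of (b) for $S(n)=2^{n^2+n}$, then $a_N$ would be computable in $\classFP_1$, forcing $\classSharpP_1\subseteq\classFP_1$. The delicate point is matching the quantifiers: (b) grants, for the specific value $S=S(n)$, dyadic approximations $h_{S(n),0,n}/2^n$ to $\hat f_{S(n),0}$ in time polynomial in $n+\log S=\poly(n)$, hence in $\poly(n)$ steps on unary input $1^n$. I would set $N=N(n)$ so that the growth $S(n)=2^{n^2+n}$ provides exactly enough frequency resolution to separate the $a_N$-encoding at precision $2^{-n}$; the quadratic exponent is presumably what aligns the $\log S\sim n^2$ budget with the number of bits of $a_N$. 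Rounding the guaranteed $2^{-n}$-approximation of $\hat f_{S(n),0}$ then recovers $a_N$ exactly in polynomial time, completing the conditional separation. Getting the index translation $n\leftrightarrow N$ and the encoding faithful enough that a worst-case $2^{-n}$ error still pins down the integer $a_N$ is the part demanding the most care.
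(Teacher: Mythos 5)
There is a genuine gap at the heart of your construction: you propose to ``let each $\tilde f_k$ carry one bit of the binary expansion of $a_N$,'' but the bits of a $\classSharpP_1$-complete counting function $a_N$ are themselves not known to be computable in polynomial time --- that is exactly the hardness you are trying to exploit. If the individual Fourier coefficients encode bits of $a_N$, then condition (a\,ii) (each $\Re c_{k,n},\Im c_{k,n}$ computable in time polynomial in $n+\log|k|$) already fails, or rather holds only under the very hypothesis $\classSharpP_1\subseteq\classFP_1$ that the theorem is conditioned against. So your $f$ would not witness Fourier-computability in polynomial time, and the whole separation collapses. Your opening claim that the coefficients are ``cheaply computable from $k$'' is contradicted by your concrete instantiation.

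The paper's construction resolves precisely this tension, and the resolving idea is the one your proposal is missing: each coefficient must correspond to an \emph{individually easy} sub-problem, with the hard \emph{aggregation} performed implicitly by the integral defining $\hat f_{S,0}$. Concretely, the paper sets $\tilde f_{2^m+w}:=2^{-m^2}$ if $\bin_m(w)\in P$ and $0$ otherwise; deciding a single coefficient is a single membership test in $P\in\classP$ (so (a\,i)--(a\,ii) genuinely are easy, and the $2^{-m^2}$ decay gives (a\,iii)), while $\hat f_{S,0}\approx\sum_k\tilde f_k=\sum_m\varphi(m)\,2^{-m^2}$ performs the witness count. Recovering $\varphi(N)$ then requires a well-separation argument --- one witness's weight $2^{-N^2}$ must exceed the combined weight $\sum_{m>N}2^m\,2^{-m^2}=2^{-N^2-N+\calO(1)}$ of all higher levels, plus the error $|\hat f_{S,0}-\sum_{k<2^{N+1}}\tilde f_k|\leq 2^{-N^2-N+\calO(1)}$ incurred by the weights $e_{S,0,k}$ at $S=2^{N^2+N}$ --- which is also where the exponent $n^2+n$ in the statement comes from. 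Your proposal gestures at the right reduction shape and correctly identifies the weights $e_{S,0,k}$ and the quantifier-matching issue, but without the one-coefficient-per-witness idea the construction as described does not work.
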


\begin{theorem}
\label{t:StepNotFourier}
There exists a function $f\in\ELLtwo[0;1]$ 
which is polynomial time Step-computable 
in mean (c), 
but $\tilde f_0\in\IC$ is not computable in polynomial time (a)
unless $\classSharpP_1\subseteq\classFP_1$.
\end{theorem}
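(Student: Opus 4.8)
The plan is to exploit the gap between the \emph{average-case} error tolerated by Step-computability in mean~(c) and the exact single-number computation that Fourier-computability~(a) demands of the $0$-th coefficient $\tilde f_0=\int_0^1 f=\hat f_{1,0}$. Concretely, I will build a nonnegative step function $f$ whose individual step heights are (scaled) indicator values of a polynomial-time predicate --- hence trivially computable one at a time --- while the \emph{mean} of all heights, which equals $\int_0^1 f=\tilde f_0$, encodes a $\classSharpP_1$-complete counting function in its binary digits. Fix a $\classSharpP_1$-complete $g(j)=\Card\{w\in\TWO^{p(j)}:R(j,w)\}$ with predicate $R\in\classP$ and $p\in\IN[N]$, so $0\le g(j)<2^{p(j)}$. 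Partition $[0;1)$ into blocks $I_j=[1-2^{1-j};1-2^{-j})$ of length $2^{-j}$, subdivide $I_j$ into $2^{p(j)}$ equal sub-intervals indexed by $w\in\TWO^{p(j)}$, and let $f$ take the value $H_j\cdot\llbracket R(j,w)\rrbracket$ on the $w$-th sub-interval, where $H_j:=2^{-(B(j)-j-p(j))}$ and $B(j):=2\sum_{i\le j}(p(i)+i)=\poly(j)$ reserves, with separating gap bits, a dedicated non-overlapping $p(j)$-bit window for $g(j)$ in the binary expansion of $\tilde f_0$. Then $\int_0^1 f=\sum_j H_j 2^{-j-p(j)}g(j)=\sum_j g(j)2^{-B(j)}$, and since $H_j\le1$ one checks $\|f\|_2^2\le\sum_j 2^{-j}<\infty$, so $f\in\ELLtwo[0;1]$.

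Next I verify Step-computability in polynomial time in mean~(c). For the step data I define, on input $(S,s,n)$: locate the block $j$ and sub-interval $w$ containing $[s/S;(s+1)/S)$; if $H_j\ge 2^{-n}$ (equivalently $B(j)-j-p(j)\le n$) output the exact dyadic height $h_{S,s,n}:=2^{\,n-(B(j)-j-p(j))}\cdot\llbracket R(j,w)\rrbracket\in\IZ$, and otherwise output $0$. Since $B(j)=\poly(j)$ and $j\le\log_2 S$, while $R\in\classP$, this map is computable in time $\poly(n+\log S)$, giving (b\,ii) for \emph{all} $(S,s,n)$. Now choose $S(n)=2^{P(n)}$ with $P(N):=2N+p(2N)=\poly(N)$. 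Every block with $H_j\ge 2^{-n}$ forces (via $B(j)\ge 2(j+p(j))$) that $j+p(j)\le n\le P(n)$, so its sub-intervals are resolved exactly by the dyadic grid of width $2^{-P(n)}$; hence each such step equals its own height and is reproduced with zero error. The remaining ``tail'' steps carry height $<2^{-n}$, contributing mean-square error $\le\sum_j(2^{-n})^2 2^{-j}\le 2^{-2n}$, which establishes (c\,i). The identical resolution argument bounds $\|f-\calS_S(f)\|_2$ by the contribution of blocks $j\ge 2m$, giving (b\,iii) for $S=2^{P(m)}=2^{\poly(m)}$. Thus $f$ is Step-computable in mean~(c); the decisive feature is that the Euclidean error norm of (c\,i) lets us compute each individual height cheaply \emph{without ever summing the exponentially many of them}.

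Finally I show $\tilde f_0$ is not polynomial-time computable unless $\classSharpP_1\subseteq\classFP_1$, which rules out~(a) since its instance $k=0$ (items (a\,i)/(a\,ii)) would require $c_{0,n}/2^n$ approximating $\tilde f_0$ to $2^{-n}$ in time $\poly(n)$. Because the windows are non-overlapping with reserved gaps and blocks $i<j$ affect only higher-order bits, one recovers $g(j)=\lfloor 2^{B(j)}\tilde f_0\rfloor\bmod 2^{p(j)}$ from any $2^{-n}$-approximation with $n=\poly(j)$: the lower blocks contribute integer multiples of $2^{p(j)}$ mod-out, the higher blocks sum to a fractional tail $<1$, and the gap bits guarantee that neither the tail nor a $2^{-n}$ carry/borrow can flip the floor. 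Hence a polynomial-time algorithm for $\tilde f_0$ yields $g\in\classFP_1$, so $\classSharpP_1\subseteq\classFP_1$. (This is consistent with Theorem~\ref{t:SharpP}: (c)$\Rightarrow$(c')$\Rightarrow$(a') shows $\tilde f_0$ is at least $\classSharpP_1$-computable, while our encoding makes it $\classSharpP_1$-hard.) The main obstacle --- and the heart of the construction --- is to meet these two demands at once: the step data satisfying (c\,i) must be so cheap that it cannot also reveal the global average, yet that average $\tfrac1S\sum_s\hat f_{S,s}=\int_0^1 f$ must be genuinely $\classSharpP_1$-hard. Balancing the polynomial growth of $B(j)$ (needed for the $\poly(j)$ reduction) against the height scale $H_j$ and the resolution $P(n)$ --- so that all relevant blocks are resolved exactly while the tail stays negligible in mean --- is the delicate point.
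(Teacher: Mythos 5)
Your construction is correct and follows essentially the same route as the paper's proof: a step function built from geometrically shrinking blocks, each subdivided into witness-indexed sub-intervals carrying scaled indicator heights of a \classP-predicate, so that every individual step height is cheap (read off exactly on the dyadic grid for the large blocks, approximated by zero for the negligibly small ones) while the integral $\tilde f_0$ encodes a $\classSharpP_1$-complete count in well-separated positional windows. The only nit is that the recovery of $g(j)$ should round $2^{B(j)}\tilde f_0$ to the nearest integer rather than take a floor, since the fractional tail can be exactly zero and a floor of an under-approximation would then be off by one; your reserved gap bits already leave enough slack for this.
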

Next recall that $\calC[0;1]\subseteq\ELLtwo[0;1]$.

\begin{theorem}
\label{t:Continuous1}
If $f\in\calC[0;1]$ is computable in \classSharpP 
according to Definition~\ref{d:One},
then it is also both Fourier-computable in \classSharpP (a')
and Step-computable in \classSharpP (b').
\end{theorem}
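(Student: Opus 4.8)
The plan is to obtain the Fourier part (a') cheaply through the mean step notion and to spend the effort on the worst-case step part (b'). For (a') I would invoke Theorem~\ref{t:SharpP} and instead establish (c'): take the step heights to be the point values $h_{S,s,n}:=v_{n,S,s}$, so that condition (b'\,ii) is literally (ii') with $(L,\ell)=(S,s)$ and holds by hypothesis. The polynomial modulus (iii) makes the piecewise-constant function with these heights converge to $f$ in $\ELLtwo$ exponential-polynomially, which supplies (b\,iii) and --- once $S(n)=2^{\poly(n)}$ is large enough for the sampling gap to be controlled by (iii) --- the mean-square bound (c\,i). The essential point is that the \emph{mean} notion never averages distinct samples, so no division ever arises, and Theorem~\ref{t:SharpP} then promotes (c') to (a').

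For the worst-case step notion (b') the heights are genuine block means $\hat f_{S,s}=S\int_{s/S}^{(s+1)/S}f$, which I would compute by dyadic quadrature. Over $R$ equidistant nodes $\ell/(SR)$ the Riemann-sum error is bounded through (iii) by $2^{-n}$ as soon as $R\ge 2^{\poly(n)}/S$, and since $\log R=\poly(n+\log S)$ the summation index ranges over a legitimate, exponentially large set. Substituting for each sample the Gaussian integer $v_{\cdot,SR,\ell}$ of (ii') and summing, I obtain $h_{S,s,n}$ as an exponential sum of \classGapP quantities, which is again \classGapP by the quoted closure; for (a') done directly the same works for $\tilde f_k$ with the additional factor $\exp(-2\pi ikj/R)$, approximated to dyadic precision in \classFP by polynomial-time sine and cosine and absorbed through closure under products. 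Whenever fewer than $2^n$ nodes suffice, I would pre-set $R$ to a power of two and the node precision to $n-\log R\ge0$, so that the sum already lives at scale $2^{-n}$ and is the required integer with no rounding.

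The step I expect to be the real obstacle is matching the denominator in the worst-case bound (b\,i) for coarse $S$, the extreme instance being $\hat f_{1,0}=\int_0^1 f$. In general the quadrature only yields an approximation of $2^n\hat f_{S,s}$ of the form $W/2^{D}$ with $W\in\classGapP$ and $D=\poly(n+\log S)$, while (b\,i) insists on an \emph{integer} at the fixed scale $2^{-n}$; for a super-linear modulus, accuracy $2^{-n}$ at $S=1$ forces more than $2^n$ nodes, the node precision turns negative, and the clean tuning above collapses. What remains is to read off a $2^{-n}$-accurate value at denominator exactly $2^n$ from a finer \classGapP sum, \ie an effective rounded division of a \classGapP quantity --- precisely the Ko--Friedman difficulty that integration already sits at the \classSharpP level. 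The delicate point, which I would treat with care, is to realise this rescaling \emph{within} \classGapP by arranging the summation so that the scale $2^{-n}$ emerges from closure under exponential sums itself, rather than through a threshold or explicit bit extraction that would otherwise climb the Counting Hierarchy.
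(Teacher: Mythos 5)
Your route to (a') is genuinely different from the paper's and is sound as far as it goes: the paper never passes through (c'), but instead forms the piecewise linear interpolant $\calL_{L,n}(\bar v,\cdot)$ of the samples and writes out its Fourier coefficients and block averages in closed form (Equations~\eqref{e:LinearFourier} and \eqref{e:LinearAverage}), reading off membership in \classGapP from closure under products and exponential sums, and the error bounds from the modulus (iii). Your detour --- take $h_{S,s,n}:=v_{n,S,s}$, observe that (b'\,ii) is literally (ii'), derive (b\,iii) and (c\,i) from the modulus, and let Theorem~\ref{t:SharpP} convert (c') into (a') --- buys (a') with almost no computation, at the price of depending on Theorem~\ref{t:SharpP}; the paper's single interpolation device is self-contained and delivers (a') and (b') simultaneously.

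For (b') your quadrature plan is in substance the paper's argument (integrating a piecewise linear interpolant rather than taking a Riemann sum only changes the explicit weights), but you stop exactly where the proof should close, and the obstacle you name is largely a misdiagnosis. The Ko--Friedman hardness of $\int_0^1 f$ is not an obstruction here: the conclusion of the theorem is only \classSharpP-computability, and closure of \classGapP under exponential sums is precisely what absorbs the $2^{\poly(n)}$ quadrature nodes needed at $S=1$; that hardness only explains why one cannot hope to conclude (b) in place of (b'). The genuine residual point is purely one of normalization: fix $R$ to be a power of two for \emph{every} $S$ (not only when $R\leq 2^n$), so that the Riemann sum equals a \classGapP Gaussian integer divided by the exact power of two $R\cdot 2^{m}$ --- no division of a \classGapP quantity ever occurs, and no bit truncation (which indeed is not known to stay inside \classGapP) is needed. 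What remains is an approximation of $\hat f_{S,s}$ to error $2^{-n}$ whose denominator is $2^{\poly(n+\log S)}$ rather than $2^{n}$; this is exactly the form in which the paper's own proofs of Theorem~\ref{t:SharpP} (denominator $2^{m+m'}$) and of the present theorem deliver their approximants and declare (b\,i)/(c\,i) established. So the ``delicate point'' you defer should be resolved by the same reindexing of the precision parameter, not by attempting a rounded division inside \classGapP.
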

\cite[Theorem~5.32de]{Ko91} exhibits a
polynomial-time computable $f\in\calC^\infty[0;1]$ 
such that $\tilde f_0=\hat f_{1,0}$
is not computable in polynomial time (a+b)
unless $\classSharpP_1\subseteq\classFP_1$.

\begin{theorem}
\label{t:Continuous2}
If $f\in\calC[0;1]$ is Step-computable in polynomial time (b)
and has a polynomial modulus of continuity (iii),
then $f$ is computable in polynomial time
according to Definition~\ref{d:One}.
\end{theorem}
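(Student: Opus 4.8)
The plan is to recover point values $f(\ell/L)$ from the local step-averages $\hat f_{S,s}$ supplied by Step-computability, using the polynomial modulus of continuity to control the gap between a point value and the average over a short interval containing it. Concretely, I would build the double sequence $(v_{n,L,\ell})$ demanded by Definition~\ref{d:One} out of the step data $(h_{S,s,m})$ of Definition~\ref{d:Two}(b), choosing the resolution $S$ and the working precision $m$ as explicit functions of the target precision $n$.

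First, fix $n,L,\ell$ with $0\le\ell<L$ and let $p$ be the polynomial witnessing the modulus (iii), so that $|t-t'|\le2^{-p(N)}$ implies $|f(t)-f(t')|\le2^{-N}$. I would set $m:=n+11$, take $S:=2^{p(m)}$ (a power of two), and let $s:=\lfloor\ell S/L\rfloor$ be the index of the unique step $[s/S,(s+1)/S)$ containing $\ell/L$; here $0\le s<S$ because $\ell<L$. Since every $t$ in that step obeys $|t-\ell/L|<1/S=2^{-p(m)}$, the modulus yields $|f(t)-f(\ell/L)|\le2^{-m}$, and integrating over the step gives $|\hat f_{S,s}-f(\ell/L)|\le2^{-m}$.

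Next I would combine this with the approximation guarantee (b\,i), namely $|\hat f_{S,s}-h_{S,s,m}/2^m|\le2^{-m}$, to obtain $|f(\ell/L)-w|\le2^{-(n+10)}$ where $w:=h_{S,s,m}/2^m$. Defining $v_{n,L,\ell}:=\lfloor\Re h_{S,s,m}/2^{m-n}\rceil+i\,\lfloor\Im h_{S,s,m}/2^{m-n}\rceil$ (coordinatewise rounding of $2^n w$ to the nearest Gaussian integer; a shift by $m-n=11$ bits) adds error $|w-v_{n,L,\ell}/2^n|\le\sqrt2\cdot2^{-(n+1)}=2^{-(n+1/2)}$, and $2^{-(n+10)}+2^{-(n+1/2)}<2^{-n}$, which is exactly (i). For the complexity (ii) I would note that $m$ and $\log S=p(m)$ are polynomial in $n$; that $s=\lfloor\ell S/L\rfloor$ is obtained by integer arithmetic on numbers of bit-length $\calO(\log L+\log S)$, hence in time polynomial in $n+\log L$; and that (b\,ii) outputs $h_{S,s,m}$ in time polynomial in $m+\log S$, i.e. polynomial in $n$, the final shift-and-round being negligible. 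Condition (iii) of Definition~\ref{d:One} is the assumed modulus verbatim, so $f$ is polynomial-time computable in the sense of Definition~\ref{d:One}.

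The argument is essentially bookkeeping, and I expect the only delicate points to be the choice of slack and the complexity accounting: the additive constant $11$ in $m=n+11$ must absorb both the averaging and step errors and the $\sqrt2$ loss incurred by rounding a complex number coordinatewise, while $S=2^{p(m)}$ must be chosen so that $\log S$ stays polynomial, keeping (b\,ii)'s runtime bound (polynomial in $n+\log S$) polynomial in $n$. I would also highlight that (b\,iii) plays no role in this direction: continuity already forces the genuine local averages $\hat f_{S,s}$ to track point values, so only (b\,i), (b\,ii) and the modulus (iii) enter the proof.
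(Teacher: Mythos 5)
Your proof is correct and follows essentially the same route as the paper's: both recover $f(\ell/L)$ from the local average $\hat f_{S,s}$ over a step of width $2^{-\poly(n)}$ containing $\ell/L$, bound $|f(\ell/L)-\hat f_{S,s}|$ via the modulus (iii), and then invoke (b\,i)+(b\,ii). Your version is in fact more careful than the paper's one-line argument, which tacitly identifies the query grid $L$ with the averaging resolution $2^{\poly(n)}$; your explicit choice $s=\lfloor\ell S/L\rfloor$ and the rounding bookkeeping fill exactly that gap.
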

Recall that the Fourier series of a continuous function
need not converge pointwise.

\begin{theorem}
\label{t:Continuous3}
If $f\in\calC[0;1]$ is Fourier-computable in \classSharpP (a')
and has a polynomial modulus of continuity (iii),
then $f$ is computable in \classSharpP
according to Definition~\ref{d:One}.
\end{theorem}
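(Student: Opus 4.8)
The plan is to exploit the hypothesised polynomial modulus of continuity~(iii) to upgrade the mere $\ELLtwo$-convergence guaranteed by (a\,iii) into genuine \emph{pointwise} convergence of the Fourier series at an exponential-polynomial rate; this is what makes the evaluation functional $f\mapsto f(\ell/L)$ accessible from the Fourier data. Write $\omega_f$ for the modulus of~$f$, so that (iii) reads $\omega_f(2^{-p(n)})\le2^{-n}$ for some polynomial~$p$. Then $\omega_f$ satisfies the Dini--Lipschitz condition $\omega_f(\delta)\log(1/\delta)\to0$ (at $\delta=2^{-p(n)}$ the product is at most $p(n)\ln2\cdot2^{-n}$), whence the symmetric Fourier partial sums converge to $f$ \emph{uniformly}, not merely in $\ELLtwo$. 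Quantitatively, the Lebesgue-constant estimate $\|f-\calF_K(f)\|_\infty\le C(\log K)\,\omega_f(1/K)$ together with (iii) gives $\|f-\calF_K(f)\|_\infty\le2^{-m}$ already for $K=2^{\poly(m)}$. This is exactly where the warning that Fourier series of continuous functions need not converge pointwise is neutralised: the polynomial modulus is strong enough to restore uniform convergence with a usable rate. Since (iii) is itself one of the three clauses required by Definition~\ref{d:One} and is available by hypothesis, it remains only to produce the pointwise data~(i)+(ii').

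Given $n,L,\ell$, I would set $K=2^{q(n)}$ with $q$ the polynomial just obtained (so that $\|f-\calF_K(f)\|_\infty\le2^{-n-2}$) and approximate
\[
 f(\ell/L)\;\approx\;\calF_K(f,\ell/L)\;=\;\sum_{|k|\le K}\tilde f_k\,\exp(2\pi ik\ell/L).
\]
Each $\tilde f_k$ is replaced by its \classGapP{} approximation $c_{k,m}/2^m$ from (a'\,ii) at an internal precision $m=\poly(n)$, generated by feeding $\sdone^m$ (computable from the unary input $\sdone^n$); each $\exp(2\pi ik\ell/L)$ is replaced by a dyadic Gaussian approximation $W_{k,L,\ell}/2^m$ computable in \classFP{}, since cosine and sine of a dyadic argument of bit-length $\poly(n+\log L)$ are polynomial-time computable to precision $2^{-m}$. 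The real and imaginary parts of the products $c_{k,m}\,W_{k,L,\ell}$ lie in \classGapP{}, because \classGapP{} is closed under subtraction and under products of polynomially many factors; and the sum over the exponentially many indices $|k|\le K=2^{\poly(n)}$ stays in \classGapP{} by closure under exponential sums. This yields a single \classGapP{} Gaussian integer $U_{n,L,\ell}$ with $U_{n,L,\ell}/2^{2m}$ approximating $f(\ell/L)$ up to $2^{-n-1}$.

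The main obstacle is the final rescaling. Definition~\ref{d:One}(i) demands the output at the specific scale $2^n$, i.e.\ a Gaussian integer $v_{n,L,\ell}$ with $|f(\ell/L)-v_{n,L,\ell}/2^n|\le2^{-n}$, whereas the construction delivers the value faithfully only at the far finer scale $2^{2m}$ with $2m=\poly(n)$. The excess precision is not an artefact: controlling the accumulated rounding error of $2^{\poly(n)}$ summands forces per-coefficient precision $2^{-m}$ with $m>n+\log K$, so that collapsing $U_{n,L,\ell}$ back to scale $2^n$ is the integer division of $U_{n,L,\ell}$ by $2^{2m-n}$ followed by rounding. This high-order bit extraction from a \classGapP{} value is in general \emph{not} itself a \classGapP{} operation; carried out by brute force it would climb the Counting Hierarchy, precisely the ascent the paper avoids (cf.\ the remark preceding Theorem~\ref{t:SharpP}). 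I expect this to be the hard part.

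I would resolve it by the same precision-normalisation device that underlies Theorem~\ref{t:SharpP}, where the analogous passage between Fourier coefficients (scale $2^m$) and step heights (scale $2^n$) is performed inside \classGapP{}; the role of the polynomial modulus~(iii) here is to keep the evaluation functional well conditioned, in particular to pin $U_{n,L,\ell}/2^{2m}$ to a bounded range around $f(\ell/L)$, so that the rescaling remains legitimate. Concretely, rather than forming the high-precision sum and dividing afterwards, one folds the normalising power of two into the summation itself, exactly as in the Fourier-to-step direction of Theorem~\ref{t:SharpP}, thereby emitting the scale-$2^n$ Gaussian integer $v_{n,L,\ell}$ directly as a \classGapP{} quantity. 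Combining the resulting clause~(i), the clause~(ii') just established, and the inherited clause~(iii) then certifies that $f$ is computable in \classSharpP{} in the sense of Definition~\ref{d:One}.
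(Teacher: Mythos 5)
Your argument is essentially the paper's: the Lebesgue-constant/Dini--Lipschitz estimate you derive from the polynomial modulus~(iii) is exactly the paper's Fact~\ref{f:JacksonII} (cited from Jackson), after which both proofs evaluate the partial sum $\calF_K(f,\ell/L)$ with $K=2^{\poly(n)}$ inside \classGapP{} using closure under products and exponential sums, and inherit clause~(iii) directly. The rescaling issue you flag in your final two paragraphs is genuine but is equally present in (and silently elided by) the paper's own proofs of Theorems~\ref{t:SharpP} and~\ref{t:Continuous3}, which accept the approximation at the finer scale $2^{\poly(n)}$ as establishing the clause stated at scale $2^n$; so your treatment is, if anything, more explicit than the published sketch.
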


\begin{theorem}
\label{t:Continuous4}
There exists $f\in\calC^\infty[0;1]$
which is Fourier-computable in polynomial time (a)
but $f(0)$ is not computable in polynomial time
unless $\classSharpP_1\subseteq\classFP_1$.
\end{theorem}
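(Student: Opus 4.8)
The plan is to encode a hard counting function into the single value $f(0)=\sum_k\tilde f_k$ while keeping every individual Fourier coefficient cheap to produce, so that (a) holds but the point value does not. Fix a $\classSharpP_1$-complete counting function $g:\IN\to\IN$, say $g(n)=\Card\{w\in\TWO^n:\ M\text{ accepts }w\}\in[0,2^n]$ for a fixed polynomial-time machine $M$. I would reserve for each level $n$ the block of positive frequencies $k\in[2^n,2^{n+1})$, which carries exactly $2^n$ indices in bijection $w\mapsto k(w)=2^n+\bin(w)$ with $\TWO^n$, and set
\[
\tilde f_{k(w)} \;:=\; [\,M\text{ accepts }w\,]\cdot 2^{-n^2},
\]
with $\tilde f_k:=0$ for all remaining $k$ (in particular for $k\le0$). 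Then $\sum_{w\in\TWO^n}\tilde f_{k(w)}=g(n)\cdot2^{-n^2}$, so that $f(0)=\sum_{k}\tilde f_k=\sum_{n\ge1}g(n)\,2^{-n^2}$.

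Next I would verify Fourier-computability~(a). Given $k\ge1$ one recovers the level $n=\lfloor\log_2 k\rfloor$ and the string $w$ from $k-2^n$, and a single simulation of $M$ (time $\poly(n)=\poly(\log|k|)$) yields the exact dyadic $\tilde f_k\in\{0,2^{-n^2}\}$. Taking $c_{k,n}$ to be the nearest integer to $2^n\tilde f_k$ (namely $2^{n-n^2}[\,\ldots\,]$ when $n\ge n^2$, and $0$ otherwise, so error $\le2^{-n^2}\le2^{-n}$) gives (a\,i) and is computable in time $\poly(n+\log|k|)$, with $\Im c_{k,n}\equiv0$, yielding (a\,ii). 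Since the level-$n$ block sits at $|k|\asymp2^n$ with $|\tilde f_k|\le2^{-n^2}=2^{-\Omega((\log|k|)^2)}$, the coefficients decay faster than any polynomial, so $f$ extends to a smooth periodic function, $f\in\calC^\infty[0;1]$, with absolutely convergent Fourier series; in particular $f(0)=\sum_k\tilde f_k$ holds genuinely. For (a\,iii),
\[
\|f-\calF_K(f)\|_2^2 \;=\; \sum_{2^n>K} g(n)\,2^{-2n^2} \;\le\; \sum_{n>\log_2 K} 2^{\,n-2n^2},
\]
which for $K=2^m$ is far below $2^{-2m}$, so (a\,iii) holds with $K=2^{\poly(m)}$.

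Then I would extract $g$ from $f(0)$. Because $g(n)<2^n$, the summand $g(n)2^{-n^2}$ occupies binary positions $[n^2-n,\,n^2]$, and the quadratic growth of $n^2$ leaves a widening gap between consecutive blocks; the tail $\sum_{m>n}g(m)2^{-m^2}<2^{-n^2-n+1}$ therefore cannot carry into position $n^2$, while the blocks of levels below $n$ are integer multiples of $2^{-(n-1)^2}$ and hence of $2^{-n^2}\cdot2^{n}$. Consequently $g(n)=\big\lfloor 2^{n^2}f(0)\big\rceil \bmod 2^n$, and rounding suffices once $f(0)$ is known to error $2^{-(n^2+n)}$. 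Were $f(0)$ polynomial-time computable, this would compute $g(n)$ in time $\poly(n^2+n)=\poly(n)$, placing $g\in\classFP_1$ and hence $\classSharpP_1\subseteq\classFP_1$; contrapositively, $f(0)$ is not polynomial-time computable unless $\classSharpP_1\subseteq\classFP_1$.

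The main obstacle is meeting three competing demands at once: each coefficient must stay individually polynomial-time with indices whose length $\log|k|$ does not exceed $\poly(n)$; the $\Elltwo$-tail must decay exponential-polynomially for (a\,iii) and faster-than-polynomially for $\calC^\infty$; and the bit-blocks of $f(0)$ must remain cleanly separated so that extraction recovers $g(n)$ \emph{exactly}. The lacunary spacing $[2^n,2^{n+1})$ combined with magnitude $2^{-n^2}$ is designed to reconcile all three, and I expect the genuinely delicate point to be the carry/gap bookkeeping in the extraction step, where one must certify that contributions from all other levels provably never perturb the positions holding $g(n)$.
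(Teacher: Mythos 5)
Your construction is exactly the one the paper uses: it is the function from the proof of Theorem~\ref{t:FourierNotStep} (Equation~\eqref{e:FourierCount}), with the same super-polynomial coefficient decay $\tilde f_k\le k^{-\log k}$ yielding $\calC^\infty$, and the same bit-block well-separation recovering the counting function from $f(0)=\sum_k\tilde f_k$. The argument is correct; only cosmetic points remain (the precision index and the level index are both called $n$ in your definition of $c_{k,n}$, and the edge case $g(n)=2^n$ should be excluded, e.g.\ by assuming $g(n)<2^n$ w.l.o.g., so that reduction $\bmod\ 2^n$ really returns $g(n)$).
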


%%%%%%%%%%%%%%%%%%%%%%%%%%%%%%%%%%%%
\subsection{Proof of Theorem~\ref{t:SharpP}: \classSharpP-equivalence}
\label{ss:SharpP}

Recall that (a\,iii) and (b\,iii) are equivalent by \cite[Theorem~17c+a]{Aras25};
and that $\classGapP$ is closed under polynomial products and under exponential sums
\cite[Lemma~3.1]{CountingComplexity}.
Also abbreviate 
\begin{equation}
\label{e:Twiddle}
e_{S,s,k} \;:=\; \big(\exp(2\pi ik \tfrac{s+1}{S}) - \exp(2\pi i k \tfrac{s}{S})\big)\cdot S/(2\pi i k) \;\in\IC 
\enspace .
\end{equation}
These complex coefficients are polynomial-time computable \cite[\S7.3]{Wei00}
\cite[Example~4]{Aras25}; more precisely:
There exist Gaussian integers $e_{S,s,k,n}\in\IZ+i\IZ$,
computable in time polynomial in $n+\log k+\log S$,
such that $|e_{S,s,k}-e_{S,s,k,n}/2^n|\leq2^{-n}$; 
for convenience let $e_{S,s,0}:=1$ and $e_{S,s,0,n}:=2^n$,
so that $|e_{S,s,k}|\leq S$ for all $s,k$.

\medskip
First suppose (a) that $f\in\ELLtwo[0;1]$ is Fourier-computable in \classSharpP:
$| \tilde f_k - c_{k,n}/2^n | \leq 2^{-n}$ for both $\Re c_{k,n},\Im c_{k,n}\in\IZ$ 
computable in time polynomial in $n+\log k$ with oracle access to $\classCH$.
%\[  | \tilde f_k - c_{k,n}/2^n | \;\leq\; 2^{-n}, \quad
%\Re \big(c_{k,n}+C\big)\;=\; \Card\big\{\vec v\in\TWO^*: |\vec v|\leq\poly(n), \vec v\in P\big\}, \quad
%\Im \big(c_{k,n}+C\big)\;=\; \Card\big\{\vec w\in\TWO^*: |\vec w|\leq\poly(n), \vec w\in q\big\}, 
%\quad P,Q\in\classP \enspace . \]
%Following the proof of \cite[Theorem~17c]{Aras25}, let
Record that
$\hat f_{S,s}= \tilde f_0+\sum\nolimits_{k\neq0} \tilde f_k \cdot e_{S,s,k}$.
Now let $h_{S,s,K,n}:=\sum_{|k|\leq K} c_{k,m}\cdot e_{S,s,k,m'}$
for $m:=n+\log K+\log S+\calO(1)$ and $m':=n+\log K+\log F+\calO(1)$,
where $F\in\IN$ is some constant (depending only on $f$)
such that $\max_k|\tilde f_k|\leq F$ which exists since $(\tilde f_k)\in\Elltwo\subset\Ell{\infty}$.
Then closure of $\classGapP$ under binary multiplication and under exponential sums
implies that $h_{S,s,K,n}$ belongs to \classGapP,
since $\log S,\log K\leq\poly(n)$ by (a\,iii) and by \cite[Theorem~17ac]{Aras25}:
establishing (b'\,ii).
\cite[Theorem~17ac]{Aras25} also yields
\[ 
4^{-n} 
\;\geq\; \big\|\calS_S\big(\calF_K(f)\big)-\calF_K(f)\big\|_2^2 
\;=\;\sum\nolimits_{s=0}^{S-1} \big|\hat f_{S,s}-h_{S,s,K}\big|^2/S
\]
for $h_{S,s,K}:=\sum\nolimits_{|k|\leq K} \tilde f_k \cdot e_{S,s,k}$.
Moreover 
\begin{eqnarray*}
|h_{S,s,K}-h_{S,s,K,n}/2^{m+m'}|
&\leq& \sum_{|k|\leq K} |c_k-c_{k,m}/2^m|\cdot |e_{S,s,k}| \\
& +& \sum_{|k|\leq K} |c_k|\cdot |e_{S,s,k}-e_{S,s,k,m'}/2^{m'}| \\
&\leq& (2K+1)\cdot 2^{-m} \cdot S \;+\; (2K+1)\cdot F \cdot 2^{-m'} 
\end{eqnarray*}
$\leq 2^{-n}$ by the above choices of $K,S$ and $m,m'$:
establishing (c\,i), and thus
Step-computability of $f$ in \classSharpP in mean (c').

\medskip
Now suppose conversely (c') that $f\in\ELLtwo[0;1]$ is Step-computable in \classSharpP in mean
with $h_{S,s,n}$ in \classGapP for $S=\poly(n)$ according to (b\,iii).
Then $c_{k,S}:=-\sum_{s=0}^{S-1} \hat f_{S,s}\cdot e_{S,s,k}$ has
\[ |\tilde f_k-c_{k,S}|^2 \;\leq\; \sum\nolimits_k |\tilde f_k-c_{k,S}|^2 \;=\;
\|f-\calS_S(f)\|_2^2  \;\leq\; 4^{-n} \]
by Parseval and (b\,iii).
Moreover $c_{k,S,n}:=-\sum_{s=0}^{S-1} h_{S,s,m}\cdot e_{S,s,k,m'}$ belongs to \classGapP.
And 
\begin{eqnarray*}
|c_{k,S}-c_{k,S,n}/2^{m+m'}|
&\leq& \sum\nolimits_{s=0}^{S-1} |\hat f_{S,s}-h_{S,s,m}/2^m| \cdot |e_{S,s,k}| \\
& +& \sum\nolimits_{s=0}^{S-1} |\hat f_{S,s}|\cdot |e_{S,s,k}-e_{S,s,k,m'}/2^{m'}| \\
&\leq& S\cdot 2^{-m} \;+\; S\cdot 2^{-m'}
\end{eqnarray*}
$\leq 2^{-n}$ for the above choices of $K,S$ and $m,m'$:
establishing (a\,i), and thus Fourier-computability of $f$ in \classSharpP (a).

%%%%%%%%%%%%%%%%%%%%%%%%%%%%%%%%%%%%
\subsection{Encoding Discrete Problems into Continuous Ones}
\label{ss:Encode}

In Subsections~\ref{ss:FourierNotStep} 
and \ref{ss:StepNotFourier}, we shall encode discrete decision
and counting problems into real numbers and functions---in
such a way that efficient approximation of the latter
allows to recover the former. 
Similar techniques are well-known \cite{Spe49,PR79,Ko82,Kaw10}.
For the unfamiliar reader we recall some basic subtleties of such encodings.
Recall that computing a real number means approximating it by dyadic rationals 
up to any given error bound $2^{-n}$.

\begin{remark}
\label{r:Encode}
Fix a discrete decision problem $P\subseteq\IN$,
and let $r_P:=\sum_{N\in P} 2^{-N}\in[0;2]$ denote the
real number whose binary expansion is the characteristic function of $P$.
\begin{enumerate}
\item[a)]
 $P$ is decidable ~iff~ $r_P$ is computable
 \cite[Theorem~2.3]{Ko91}, \cite[Lemma~4.2.1]{Wei00}.
\item[b)]
 The equivalence (a) is not uniform:
 No algorithm can decide $P$ from given approximations to $r_P$
\cite{Tur37}, \cite[Theorem~4.1.13.3]{Wei00}.
\item[c)]
 The equivalence (a) does not extend to computational complexity:
 There exist polynomial-time computable $r_P$ 
 such that $P\not\in\classP$, 
 regardless of the Millennium Prize Problem \cite[Theorem~2.8]{Ko91}.
\item[d)]
Proceeding from binary to ternary encoding,
let $r'_P:=\sum_{N\in P} 3^{-N}$. 
Then $r'_P$ is polynomial-time computable ~iff~ $P\in\classP$;
and this equivalence holds uniformly.
\end{enumerate}
To see (d), note that 
\[ 0\;\in\; P \quad\Longleftrightarrow\quad r'_P\;\geq\;1, \qquad
0\;\not\in\; P \quad\Longleftrightarrow\quad r'_P\;\leq\;1/2 \enspace : \]
the two cases can be distinguished by approximating $r'_P$ up to error $<\tfrac{1}{2}$.
Forcing $0\not\in P$ by replacing $r'_P$ with $r'_P-1$ in case $0\in P$, 
``$1\in P$?'' can be recovered next. Indeed it holds 
\begin{equation}
\label{e:WellSep0}
N\in P \;\Leftrightarrow\; r'_P\geq3^{-N}, \quad
N\not\in P \;\Leftrightarrow\; r'_P\leq3^{-N}/2=\sum\nolimits_{n>N} 3^{-n}
\end{equation}
provided that $0,1,\ldots,N-1\not\in P$.
The latter (at most $N$-fold) repeated conditional subtractions of $3^{-0},3^{-1},\ldots,3^{-N+1}$ from $r'_P$
are performed with precision $4^{-N}$ each: such that the 
two cases in Equation~\eqref{e:WellSep0} remain numerically
`well-separated' in spite of error propagation.
\qed\end{remark}
Equation~\eqref{e:WellSep0} says that all $n$-th terms, $n>N$,
can sum up to at most half the $N$-th term.
In the sequel we ensure similar numerical `well-separations' (*) when devising
more sophisticated encodings of discrete counting problems into \ELLtwo-functions.
%
%%%%%%%%%%%%%%%%%%%%%%%%%%%%%%%%%%%%
\subsection{Proof of Theorem~\ref{t:FourierNotStep}: Fourier but not Step polytime}
\label{ss:FourierNotStep}

Recall that
\begin{equation}\label{e:Twiddle2}
\hat f_{S,0} \;=\; \tilde f_0+\sum\nolimits_{k\neq0} \tilde f_k \cdot e_{S,0,k},  \quad
e_{S,0,k} \;=\; \big(\exp(2\pi ik /S) - 1\big)\cdot \tfrac{S}{2\pi i k}  \end{equation}
according to Equation~\eqref{e:Twiddle}.
We also record $|e_{S,0,k}|\leq1$ and,
for $|k/S|\ll 1$, $e_{S,0,k}\approx 1$ by L'H\^opital; 
more precisely
\begin{equation} 
\label{e:ExpEstim}
|k/S|\leq1 \quad\Longrightarrow\quad |e_{S,0,k}-1|\leq |k/S| \enspace . 
\end{equation}
We also record
\[ \sum\nolimits_{n\geq N} 2^{-n^2}
\;=\; \sum\nolimits_{m\geq 0} 2^{-(N^2+2mN+m^2)}
\;\leq\; 2^{-N^2+\calO(1)} \enspace . \]
Now to construct $f$ as claimed in Theorem~\ref{t:FourierNotStep},
fix some $\classSharpP_1$-complete $\varphi:\IN_+\to\IN$ w.l.o.g. of the form
\begin{equation}
\label{e:SharpPone}
\varphi(m) \;=\; \Card\big\{ \vec w\in\TWO^m: \vec w\in P\big\}, \quad P\in\classP 
\end{equation}
and let $f$ be the Fourier series with coefficients 
\begin{equation}
\label{e:FourierCount}
\tilde f_k \;:=\; 2^{-m^2} \cdot \left\{ \begin{array}{r@{\;:\;}l}
1 & \bin_m(w)\in P \\
0 & \bin_m(w)\not\in P 
\end{array} \right\} , \quad 
k=2^m+w, \; w\in\{0,1,\ldots,2^m-1\} %\enspace ,
\end{equation}
where 
\[ \bin_m\Big(\sum\nolimits_{j=0}^{m-1} b_j 2^j\Big) \;:=\; (b_0,\ldots,b_{m-1})\;\in\;\TWO^m  \]
and $\tilde f_k:\equiv0$ for $k\leq1$.
Said coefficient sequence $\tilde f_k$ 
satisfies Definition~\ref{d:Two}a\,iii),
and in fact is polynomial-time computable (Definition~\ref{d:Two}a\,i+ii)
since $P\in\classP$ and $\log|k|=\calO(m)$;
moreover it is square summable,
thus giving rise to an $\ELLtwo$-function $f$.

Intuitively, in view of Equations~\eqref{e:Twiddle2} and \eqref{e:ExpEstim} and \eqref{e:FourierCount},
$\hat f_{S,0}\approx\sum\tilde f_k$ should allow 
for recovering $\varphi(N)$ %from Equation~\ref{e:SharpPone} 
from $\hat f_{S,0}$ 
within time polynomial in $N$ for all sufficiently large $N$,
and to thus conclude the proof. 
To make this idea rigorous note that, in $\sum_k\tilde f_k$,
the case $\varphi(N)\geq1$ 
is `well-separated' (Subsection~\ref{ss:Encode})
from the case $\varphi(N)=0$;
namely the weight $\tilde f_{2^N+W}$ of one witness
$W$ to $N\in P$ is larger than the weight of all 
possible witnesses $w$ to all $n>N$ combined:
For $K:=2^N+W$ with $0\leq W<2^N$,
\begin{multline*}
\bin_N(W)\in P 
\quad\Rightarrow\quad 
\tilde f_{K}=2^{-N^2}
\overset{(*)}{\;\gg\;}
2^{-N^2-N+\calO(1)}  \\
= \; \sum\limits_{m>N} 2^m\cdot 2^{-m^2}
\;\geq\;
\sum\limits_{m>N} \sum\limits_{w=0}^{2^m-1} \tilde f_{2^m+w} 
\;=\;\sum\limits_{k\geq2^{N+1}} \tilde f_k
\enspace . \end{multline*} 
Moreover the two cases are also well-separated
in $\hat f_{S,0}$ for $S:=2^{N^2+N}$: 
%Let $K:=2^{N+1}$ to include all candidate witnesses to $\varphi(N)$ and estimate
\begin{multline*} 
\Big| \hat f_{S,0} - \sum\limits_{k<2^{N+1}} \tilde f_k \Big|
\;\;\leq\;\;
\sum\limits_{k<2^{N+1}} \underbrace{\tilde f_k}_{\leq 1}\cdot \underbrace{\big|e_{S,0,k}-1\big|}_{\leq k/S \text{ by \eqref{e:ExpEstim}}}
\;+\;
\sum\limits_{k\geq 2^{N+1}} |\tilde f_k| \cdot \underbrace{|e_{S,0,k}|}_{\leq1} \\
\leq\;\;
2^{N+1}\cdot (2^{N+1}/2^{N^2+N}) \;+\; \sum\nolimits_{k\geq2^{N+1}} |\tilde f_k| 
\quad\overset{(*)}{\leq\;} 2^{-N^2-N+\calO(1)} \enspace .
\end{multline*}

%%%%%%%%%%%%%%%%%%%%%%%%%%%%%%%%%%%%
\subsection{Proof of Theorem~\ref{t:StepNotFourier}: Step but not Fourier polytime}
\label{ss:StepNotFourier}

\begin{figure}[htb]
\includegraphics[width=0.98\textwidth]{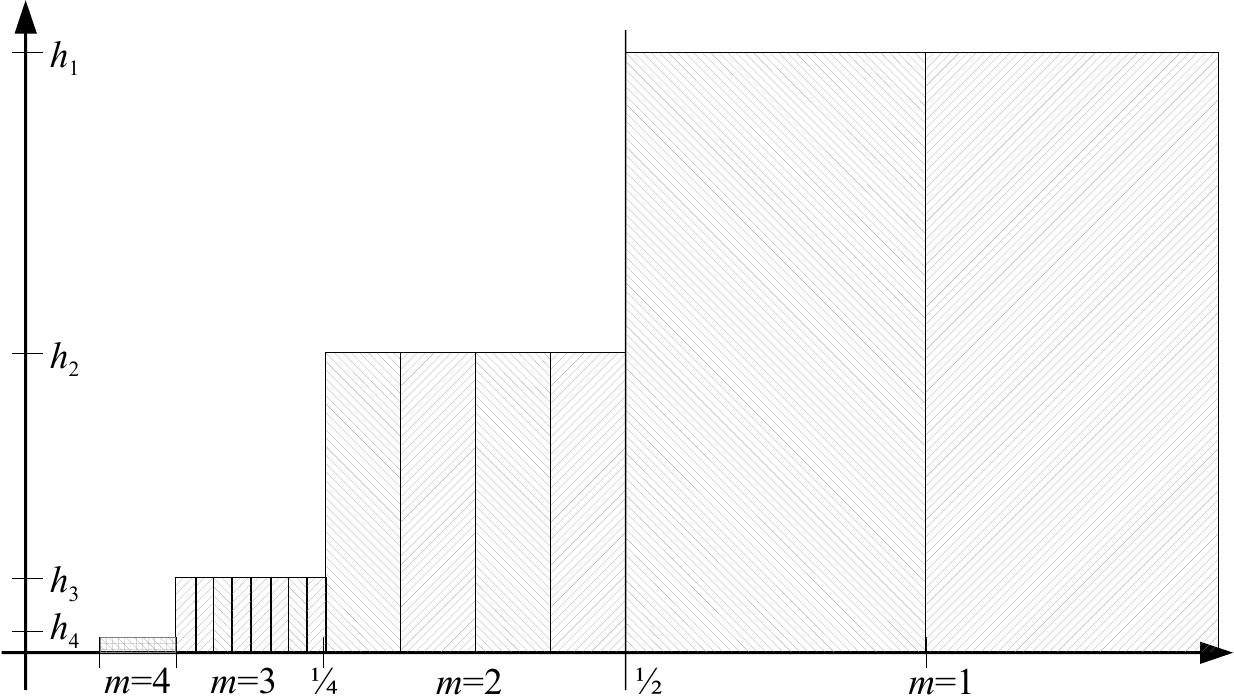}\vspace*{-2ex}%
\caption{\label{f:StepFunc}Illustrating the construction of $g$ from the proof of Theorem~\ref{t:StepNotFourier}}
\end{figure}

We adopt the construction from \cite[\S3]{Fri84}; 
see also the proof of \cite[Theorem~5.32]{Ko91}.
To this end fix some $\classSharpP_1$-complete $\varphi:\IN\to\IN$ w.l.o.g. of the Form~\eqref{e:SharpPone}.
Encode $\varphi$ into the real number $\int_0^1 g(t)\,dt$
for a piece-wise constant function $g:[0;1]\to[0;1]$ 
with infinitely descending step widths as follows:
Divide $(0;1]$ into adjacent sub-intervals $I_m:=(2^{-m};2^{-m+1}]$, 
one for each possible argument $m\in\IN_+$ to $\varphi$;
and sub-divide each $I_m$ further into $2^m$ sub-sub-intervals $I_{m,\vec w}$
of width $4^{-m}$, one for each possible $\vec w\in\TWO^m$; see Figure~\ref{f:StepFunc}.
Make the restriction $g_{m,\vec w}:=g|_{I_{m,\vec w}}$ identically zero in case $\vec w\not\in P$,
while $g_{m,\vec w}:\equiv2^{-m^2}$ in case $\vec w\in P$.
The areas thus assigned to each argument $m$ and witnesses $\vec w$ 
satisfy the following well-separation:
\[
4^{-N}\cdot 2^{-N^2} \overset{(*)}{\;\gg\;}
4^{-N^2-3N+\calO(1)} \;=\;
\sum\nolimits_{n>N} \sum\nolimits_{w=0}^{2^n-1} 4^{-n}\cdot 2^{-n^2} \enspace .
\]
This shows that $\varphi(N)$ can be recovered from
the total area $\int_0^1 g(t)\,dt=\tilde f_0\in\IC$
in time polynomial in $N$.
We now establish Step-computability of $g$ in polynomial time
(Definition~\ref{d:Two}b), thus concluding the proof of
Theorem~\ref{t:StepNotFourier}.
Note that Definition~\ref{d:Two}b) requires approximation by
steps of given width $1/S$, so we need to efficiently 
average over the `thin' steps comprising $g$,
although said area encodes $\varphi$.
For Step $s=0$, this is $\classSharpP_1$-hard;
recall (the proof of) Theorem~\ref{t:FourierNotStep}.
But rather than (b) worst-case approximation for each $s$,
we only need to show approximation in mean (c).
To this end fix $N\in\IN$ and let $S:=2^N$:
For all $m\leq N/2$ and all $w<2^m$, 
$I_{m,w}$ is equal to or contains some step 
$\big(s/S;(s+1)/S\big]$ of $\calS_S(f)$,
whose height $\hat h_{S,s}=g_{m,\vec w}$ 
thus can simply be read off exactly 
within time polynomial in $m\leq\calO(N)$.
For all $m\geq\sqrt{N}$ on the other hand, 
$|g_{m,\vec w}|\leq2^{-m^2}$
is trivially approximated in mean
by steps $\hat h_{S,s}$ of height zero.

%%%%%%%%%%%%%%%%%%
\subsection{Proof of Theorem~\ref{t:Continuous1}}
\label{ss:Continuous1}

Let us write
\[
\calL_{L,n}(\bar v,t) \;:=\; 
%v_{L,\ell,n}/2^n \;+\; (L\cdot t-\ell)\cdot \big(v_{n,L,\ell+1}-v_{n,L,\ell}\big)/2^n \;=\; 
(L\cdot t-\ell)\cdot v_{n,L,\ell+1}/2^n \;+\; (1-L\cdot t+\ell)\cdot v_{n,L,\ell}/2^n, \quad
\tfrac{\ell}{L}\leq t\leq \tfrac{\ell+1}{L}  \]
for the piecewise linear function induced by $\bar v$, and record
\begin{eqnarray}
\label{e:LinearAverage}
\int\nolimits_{\sigma}^{\tau} \calL_{L,n}(\bar v,t)\,dt
&=& (\tau-\sigma)\cdot\big(L\cdot \tfrac{\tau+\sigma}{2}-\ell\big)\cdot v_{n,L,\ell+1}/2^n  \\ \nonumber
&+& (\tau-\sigma)\cdot\big(1-L\cdot\tfrac{\tau+\sigma}{2}+\ell\big)\cdot v_{n,L,\ell}/2^n
, \quad \tfrac{\ell/L}\leq \sigma\leq \tau \leq \tfrac{\ell+1}{L} 
\end{eqnarray}
\begin{eqnarray} \label{e:LinearFourier} %\nonumber
\widetilde{\calL_{L,n}(\bar v)}_k 
&=& \sum\limits_{\ell=0}^{L-1} 
\big( (\ell+1) \cdot v_{n,L,\ell} \:-\: \ell\cdot v_{n,L,\ell+1} \big) \\ \nonumber %\label{e:LinearFourier}
&& \qquad\quad \cdot\;
\big( \exp(-2\pi i k \tfrac{\ell+1}{L}) - \exp(-2\pi i k \tfrac{\ell}{L}) \big)/(2\pi i k\cdot 2^n) \\ \nonumber
&+& \sum\limits_{\ell=0}^{L-1}
\big( \exp(-2\pi i k \tfrac{\ell+1}{L})\cdot (1+2\pi i k \tfrac{\ell+1}{L})
- \exp(-2\pi i k \tfrac{\ell}{L})\cdot (1+2\pi i k \tfrac{\ell}{L})\big) 
\\ \nonumber
&& \qquad\quad \cdot\;
L\cdot \frac{v_{n,L,\ell+1}-v_{n,L,\ell}}{4\pi^2 k^2\cdot 2^n},
\qquad k\neq 0 \\[0.5ex] \nonumber
\widetilde{\calL_{L,n}(\bar v)}_0 
&=& \sum\nolimits_{\ell=0}^{L-1} 
\big( (\ell+1) \cdot v_{n,L,\ell+1} \:-\: \ell\cdot v_{n,L,\ell} \big)/(L\cdot 2^{n}) 
\enspace .
\end{eqnarray}
Similarly to the proof of Theorem~\ref{t:SharpP},
$\widetilde{\calL_{L,n}(\bar v)}_k$ 
and 
$\widehat{\calL_{L,n}(\bar v)}_{S,s}$ 
can thus be seen computable in \classSharpP
whenever $\bar v$ is,
due to the aforementioned closure properties of \classGapP.
And (iii) implies 
$\big|\tilde f_k-\widetilde{\calL_{L,n}(\bar v)}_k\big|\leq 2^{-m}$
for $m=n+\log k+\log L+\calO(1)$ and $L=2^{\poly(m)}$; similarly
$\big|\hat f_{S,s}-\widehat{\calL_{L,n}(\bar v)}_{S,s}\big|\leq 2^{-m}$
for $m=n+\log S+\log L+\calO(1)$ and $L=2^{\poly(m)}$.

%iii) It holds $\|f-\calL_L(f)\|_\infty \leq 2^{-m}$ for $L=2^{\poly(m)}$, where
%\[ \calL_L(f,t) \;:=\; 
%f(\ell/L) \;+\; \big(L\cdot t-\ell\big)\cdot \big(f(\tfrac{\ell+1}{L})-f(\tfrac{\ell}{L})\big),
%\quad \tfrac{\ell}{L}\leq t\leq \tfrac{\ell+1}{L} \]
%denotes the piecewise linear function interpolating $f$ with $L$ equidistant breaking points. 

%%%%%%%%%%%%%%%%%%
\subsection{Proof Sketches of Theorems~\ref{t:Continuous2}+\ref{t:Continuous3}+\ref{t:Continuous4}}
\label{ss:Continuous2}

Having spelled out in full detail the proofs of the above theorems,
we can take a more leveled perspective and focus on proof ideas
for the remaining results.

\begin{proof}[Theorem~\ref{t:Continuous2}]
Record that
\begin{eqnarray*}
\big| f(\ell/L) - \hat f_{L,\ell} \big|
&=&
\big| L\cdot\int\nolimits_{\ell/L}^{(\ell+1)/L} f(\ell/L) - f(t)\, dt\big| \\
&\leq&
L\cdot\int\nolimits_{\ell/L}^{(\ell+1)/L} |f(\ell/L) - f(t)| \, dt 
\quad\leq\; 2^{-n}
\end{eqnarray*}
for $L:=2^{\poly(n)}$:
(i)+(ii) thus follow from (b\,i) and (b\,ii).
\qed\end{proof}
Next record from \cite[\S I.3 Corollary~II]{Jackson30}:

\begin{fact}
\label{f:JacksonII}
If $f\in\ELLtwo[0;1]$ has a polynomial modulus of continuity (iii), then
\[ \big\| f - \calF_{K}(f)\big\|_\infty \;\leq\; \calO(n\cdot 2^{-n}) 
\quad\text{ for } K=2^{\poly(n)} \enspace . \]
\end{fact}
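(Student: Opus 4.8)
The statement is the classical Jackson--Lebesgue estimate specialised to a polynomial modulus of continuity, so the plan is to combine two standard ingredients and then tune the polynomial bookkeeping. Write $\omega(\delta):=\sup\{|f(t)-f(t')|:|t-t'|\leq\delta\}$ for the modulus of continuity of the $1$-periodic extension of $f$, and let $E_K(f):=\inf\{\|f-p\|_\infty: p \text{ a trigonometric polynomial of degree}\leq K\}$ be the best uniform trigonometric approximation error. Jackson's theorem, as recorded in \cite[\S I.3]{Jackson30}, supplies $E_K(f)\leq C\cdot\omega(1/K)$ for an absolute constant $C$.

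First I would pass from best approximation to the truncated Fourier series itself. Since $\calF_K$ is a bounded linear projection onto the trigonometric polynomials of degree $\leq K$ that reproduces every such polynomial, the standard Lebesgue estimate gives
\[ \big\|f-\calF_K(f)\big\|_\infty \;\leq\; (1+\Lambda_K)\cdot E_K(f), \]
where $\Lambda_K=\calO(\log K)$ is the Lebesgue constant, \ie the $\ELLone$-norm of the Dirichlet kernel. Combining the two inequalities yields the core estimate
\[ \big\|f-\calF_K(f)\big\|_\infty \;\leq\; \calO(\log K)\cdot\omega(1/K). \]
Next I would feed in the polynomial modulus of continuity (iii), which says precisely that $\omega(2^{-p(m)})\leq 2^{-m}$ for some fixed nondecreasing polynomial $p$ and all $m$. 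Setting $K:=2^{q(n)}$ makes $\log K=\calO(q(n))$ and $\omega(1/K)=\omega(2^{-q(n)})$, so it suffices to drive $\omega(2^{-q(n)})$ below $2^{-n}/q(n)$ in order to absorb the Lebesgue factor. The self-consistent choice $q(n):=p(2n)$ works: since $\log_2 q(n)=\calO(\log n)$, the precision $m:=n+\lceil\log_2 q(n)\rceil$ satisfies $m\leq 2n$ eventually, whence $p(m)\leq p(2n)=q(n)$ and therefore $\omega(2^{-q(n)})\leq 2^{-m}\leq 2^{-n}/q(n)$. The core estimate then collapses to
\[ \big\|f-\calF_K(f)\big\|_\infty \;\leq\; \calO(q(n))\cdot 2^{-n}/q(n) \;=\; \calO(2^{-n}), \]
with $K=2^{q(n)}=2^{\poly(n)}$ as required; this is in fact slightly stronger than the asserted $\calO(n\cdot 2^{-n})$.

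The main obstacle is conceptual rather than computational: the Fourier machinery is intrinsically periodic, so the $\omega$ that enters Jackson's theorem is the modulus of continuity of the $1$-periodic extension of $f$, and $\omega(1/K)\to 0$ only when $f(0)=f(1)$; otherwise the periodic extension jumps and the Gibbs phenomenon destroys uniform convergence. I would therefore read hypothesis (iii) as a condition on the periodic extension, observing that this is exactly the regime in which the Fourier series converges uniformly---which is precisely the role the polynomial modulus of continuity plays in the intended application (Theorem~\ref{t:Continuous3}). The logarithmic Lebesgue factor, the only other delicate point, is harmless because enlarging $K$ merely polynomially suffices to absorb it.
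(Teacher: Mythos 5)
The paper does not prove this statement at all: it records it as a Fact imported verbatim from \cite[\S I.3 Corollary~II]{Jackson30}, and your argument is precisely the standard derivation of that corollary, so it is correct and fills in what the paper leaves to the citation. Your two ingredients --- Jackson's best-approximation bound $E_K(f)\leq C\,\omega(1/K)$ and the Lebesgue-constant estimate $\|f-\calF_K(f)\|_\infty\leq(1+\Lambda_K)E_K(f)$ with $\Lambda_K=\calO(\log K)$ --- combine exactly as you say, and your bookkeeping with $q(n):=p(2n)$ and $m:=n+\lceil\log_2 q(n)\rceil$ is sound; it even yields the slightly sharper bound $\calO(2^{-n})$ in place of the stated $\calO(n\cdot 2^{-n})$ (the paper's weaker form is just what one gets by applying the cited corollary without re-tuning the exponent). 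You are also right to flag the one genuine delicacy that the paper glosses over: hypothesis (iii) must be read as a modulus of continuity of the $1$-periodic extension (equivalently $f(0)=f(1)$), since otherwise $\calF_K(f)$, being a trigonometric polynomial, cannot converge to $f$ uniformly; this reading is consistent with how the Fact is used in Theorem~\ref{t:Continuous3}. No gaps.
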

\begin{proof}[Theorem~\ref{t:Continuous3}]
Summing the first $2^{\poly(n)}$ terms of the Fourier series
thus approximates $f$; and incurs complexity in \classSharpP
due to the aforementioned closure properties of \classGapP.
\qed\end{proof}

\begin{proof}[Theorem~\ref{t:Continuous4}]
Recall that our proof of Theorem~\ref{t:FourierNotStep} encodes
a $\classSharpP_1$-hard problem into $\sum_k\tilde f_k=f(0)$,
where $0\leq \tilde f_k\leq k^{-\log k}$
according to Equation~\eqref{e:FourierCount}.
In particular $|(2\pi ik)^d\cdot\tilde f_k|\leq k^{d-\log k}$ 
is absolutely summable for every fixed $d$:
Recall that $2\pi i k\tilde f_k$ are the Fourier Coefficients 
of the first derivative $f'$.
Therefore the function $f\in\ELLtwo[0;1]$ 
constructed in our proof of Theorem~\ref{t:FourierNotStep}
is $d$-times differentiable for every $d\in\IN$.
\qed\end{proof}

%%%%%%%%%%%%%%%%%%%%%%%%%%%%%%%%%%%%%%%%%%%%%%%%%
\section{Applications}
\label{s:Applications}

%%%%%%%%%%%%%%%
\subsection{Heat Equation}\label{ss:Heat}

Consider an initial-value problem for the heat equation on the unit interval with periodic boundary conditions:
\begin{equation} \label{eq:heat}
\begin{cases}
u_t = u_{xx}, \ x \in [0,1) \text{ mod } 1, \ t > 0, \\   
u(x,0) = f(x),
\end{cases}
\end{equation}
where $f\in L^2[0,1]$. As is well known, this problem has a unique solution $u$, and for any fixed $t>0$, $u(\cdot,t)\in C^{\infty}[0,1]$ (hence also in $H^s[0,1]$ for all $s>0$). Recall that  $r\in\mathbb R$ is \emph{polynomial-time computable} if there is an algorithm that, given $n \in \mathbb{N}$, outputs, within time polynomial in $n$, a sequence of  $r_n\in\mathbb Z$ such that $|r - r_n/2^n| \leq 2^{-n}$.

\begin{theorem}\label{t:heat}
    Let $t > 0$ be polynomial-time computable real numbers. \newline
     If the initial function $f \in L^2[0,1]$ is Fourier-computable in polynomial time, then the solution $u(\cdot, t)$ at time $t$ is also Fourier-computable in polynomial time. %\Sveta{[Also, point-wise polynomial-time computable in the classical sense, it looks like.]}.
    %b) If $f \in L^2[0,1]$ is Step-computable in polynomial time, then  $u(\cdot, t)$ at time $t$ is also Step-computable in polynomial time. \Sveta{[Actually might be false, and $\sharp$P$_1$-hard, not sure yet.]}
\end{theorem}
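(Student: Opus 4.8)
The plan is to exploit the explicit spectral representation of the solution. Since the periodic eigenfunctions $\exp(2\pi i k x)$ diagonalise $\partial_{xx}$ with eigenvalues $-4\pi^2 k^2$, the unique solution of \eqref{eq:heat} reads $u(x,t)=\sum_k \tilde f_k\cdot\exp(-4\pi^2 k^2 t)\cdot\exp(2\pi i k x)$, so that the Fourier coefficients of $u(\cdot,t)$ are simply $\tilde u_k=\tilde f_k\cdot\exp(-4\pi^2 k^2 t)$. Proving the theorem thus reduces to verifying the three conditions (a\,i), (a\,ii), (a\,iii) of Definition~\ref{d:Two} for this coefficient sequence, given that $f$ already satisfies them with some Gaussian integer sequence $(c_{k,n})$ and some $F\in\IN$ with $\sup_k|\tilde f_k|\leq F$ (which exists since $(\tilde f_k)\in\Elltwo\subset\Ell{\infty}$).

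Condition (a\,iii) is immediate and in fact far stronger than required: by Parseval, $\|u(\cdot,t)-\calF_K(u(\cdot,t))\|_2^2=\sum_{|k|>K}|\tilde f_k|^2\exp(-8\pi^2 k^2 t)\leq F^2\sum_{|k|>K}\exp(-8\pi^2 k^2 t)$, and the super-exponential (Gaussian) decay of the summand makes this tail drop below $4^{-m}$ already for $K=\calO(\sqrt{m/t})$, hence a fortiori for $K=2^{\poly(m)}$. This reflects the $\Cinfty$-smoothing of the heat semigroup.

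For (a\,i) and (a\,ii) I would split into two regimes according to the magnitude of $k$. Fix $n$. Whenever $|k|\geq\sqrt{(n+\log_2 F)\ln 2/(4\pi^2 t)}=\calO(\sqrt{n})$ the factor $\exp(-4\pi^2 k^2 t)\leq 2^{-n}/F$ already forces $|\tilde u_k|\leq 2^{-n}$, so I simply output $d_{k,n}:=0$. For the remaining modes $|k|=\calO(\sqrt{n})$, i.e.\ $\log|k|=\calO(\log n)$, the argument $x:=4\pi^2 k^2 t=\calO(n)$ stays bounded; here I would approximate $\tilde f_k$ by $c_{k,m}/2^m$ via the hypothesis (a\,ii) and multiply by a sufficiently precise approximation of $\exp(-x)$, choosing $m=n+\log_2(F+2)+\calO(1)$ so that the standard product error propagation (using $|\tilde f_k|\leq F$ and $\exp(-x)\leq 1$) yields $|\tilde u_k-d_{k,n}/2^n|\leq 2^{-n}$.

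The crux, and the step I expect to be the main obstacle, is evaluating $\exp(-x)$ to absolute precision $2^{-n}$ within time polynomial in $n$, even though for the surviving modes the argument $x=4\pi^2 k^2 t$ grows linearly in $n$. Since $t$ and $\pi$ are polynomial-time computable, $x$ itself is approximable up to error $2^{-n-1}$ in time polynomial in $n+\log|k|$; and because $|\tfrac{d}{dx}\exp(-x)|=\exp(-x)\leq 1$ on $[0;\infty)$, this input error propagates to an output error $\leq 2^{-n-1}$. It remains to evaluate $\exp(-x)$ at the approximated argument; I would use the identity $\exp(-x)=(e^{-1})^{\lfloor x\rfloor}\cdot\exp(-\{x\})$, computing the integer power by $\lfloor x\rfloor=\calO(n)$ repeated squarings of a high-precision approximation of $e^{-1}$, and the fractional factor $\exp(-\{x\})$ with $\{x\}\in[0;1)$ from its rapidly convergent Taylor series---both in time polynomial in $n$. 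Since the full computation is needed only for the $\calO(\sqrt{n})$ non-truncated modes, the total cost in (a\,ii) remains polynomial in $n+\log|k|$, completing the argument.
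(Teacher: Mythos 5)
Your proposal is correct and follows essentially the same route as the paper's proof: spectral representation $\tilde u_k=\tilde f_k e^{-4\pi^2k^2t}$, Parseval for (a\,iii), truncation of modes with $|k|\gtrsim\sqrt{n}$ to $d_{k,n}=0$, and product error propagation with an auxiliary precision $m=n+\log_2(F+2)+\calO(1)$ for the surviving $\calO(\sqrt{n})$ modes. The only cosmetic differences are that you bound the (a\,iii) tail via the $\Ell{\infty}$ bound and Gaussian decay where the paper reuses the $\Elltwo$ tail of $f$ with $e^{-8\pi^2k^2t}\leq1$, and that you spell out the evaluation of $\exp(-x)$ which the paper simply cites as polynomial-time computable.
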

\begin{proof}
    The solution of \eqref{eq:heat} is given by
%\begin{equation} \label{eq:solution}
$$u(x,t) = \sum_{k \in \mathbb{Z}} \tilde{f}_k e^{-4\pi^2 k^2 t} e^{2\pi i k x},$$
%\end{equation}
thus, the Fourier coefficients of $u(\cdot, t)$ are $\tilde{u}_k(t) = \tilde{f}_k e^{-4\pi^2 k^2 t}$.

Since $f$ is Fourier-computable in polynomial time, there exist a polynomial $p$ and a double sequence $(c_{k,n})$ of Gaussian integers, such that ai), aii) of Definition \ref{d:Two} hold, and  $\big\|f - \mathcal{F}_K(f)\big\|_2 \leq 2^{-n}$ with $K = 2^{p(n)}$.

By Parseval's identity,
$$\big\|u(\cdot,t) - \mathcal{F}_K(u(\cdot,t))\big\|_2^2
= \sum_{|k|>K} |\tilde{u}_k(t)|^2
= \sum_{|k|>K} |\tilde{f}_k|^2 e^{-8\pi^2 k^2 t}
\leq \sum_{|k|>K} |\tilde{f}_k|^2 \leq 2^{-2n},$$
hence $$\big\|u(\cdot,t) - \mathcal{F}_K(u(\cdot,t))\big\|_2 \leq 2^{-n},$$ therefore aiii) for $u(\cdot,t)$ holds for the same $K = 2^{p(n)}$.

Now we need to construct a double sequence $(d_{k,n})$ of Gaussian integers, such that ai), aii) hold for $u(\cdot,t)$. 
Since $t$ is polynomial-time computable, we can compute, within polynomial time, sequence $|t-t_m/2^m|\leq 2^{-m}$. It is possible to compute, within polynomial time, an integer sequence $b_{k,m}$ (i.e., a sequence of Gaussian integers with zero imaginary part) such that
$\bigl| e^{-4\pi^2  k^2 t} - b_{k,m}/2^m \bigr| \leq 2^{-m}$,
since the exponential function is polynomial-time computable, as well as the argument $4\pi^2  k^2 t$. We also have $\big|\tilde f_k-c_{k,m}/2^m\big|\leq2^{-m}$.   

From  aiii) with $n=0$  we have $K_0 = 2^{p(0)}$ such that $\big\|f - \mathcal{F}_{K_0}(f)\big\|_2 \leq 1$, hence 
$\big\|f\big\|_2 \leq \big|\mathcal{F}_{K_0}(f)\big|_2 + 1$. Since $K_0$ is a constant, we can compute a rational number $M$ with $\big\|f\big\|_2\leq M$ in constant time (by approximating the finitely many coefficients $\tilde{f}_k$ for $|k|\leq K_0$ using the given $c_{k,1}$).  

Define
$$\alpha_n = \Bigl\lceil \sqrt{ \frac{\ln M + n\ln 2}{4\pi^2  t} } \Bigr\rceil .
$$
For $|k| \geq \alpha_n$ we have $|\tilde{u}_k(t)| \leq M e^{-4\pi^2  k^2 t} \leq 2^{-n}$; set $d_{k,n}=0$ for these $k$.

For $|k| < \alpha_n$ choose an auxiliary precision parameter
$m = n + \lceil \log_2(M+2) \rceil + 1$.

For each such $k$, let $P_{k,m} = c_{k,m} \cdot b_{k,m} \in \mathbb{Z}[i]$ and define $d_{k,n}$ by rounding $P_{k,m}/2^{2m-n}$ to the nearest Gaussian integer:
    $$d_{k,n} = \Bigl\lfloor \frac{P_{k,m}}{2^{2m-n}} \Bigr\rceil.$$
    Then
    $$
    \bigl| \tilde{u}_k(t) - d_{k,n}/2^n \bigr| \leq 2^{-n}.
    $$
    
Finally, let us check the condition aii) for $u$.  
For $|k| \geq \alpha_n$, $d_{k,n}=0$. For $|k| < \alpha_n$, note that $\alpha_n = O(\sqrt{n})$, hence $\log|k| = O(\log n)$.  
Computing $c_{k,m}$ takes time polynomial in $m+\log|k| = \text{poly}(n)$.  
Computing $b_{k,m}$ takes time polynomial in $m$. The multiplication and rounding are polynomial in the bit-length of $P_{k,m}$, which is $O(m)$.  
Thus $(n,k) \mapsto d_{k,n}$ is computable in time polynomial in $n+\log|k|$.\qed
\end{proof}

\begin{remark}
\label{r:Blowup}
    In contrast to Theorem~\ref{t:heat}, \cite[Theorem~4e/\S4.3]{KPSZ23} has constructed 
a polynomial-time computable initial condition $f\in \calC^\infty$ 
such that $u(\cdot,1)$ is continuous but not polynomial-time computable 
as a continuous function unless $\classSharpP_1\subseteq\classFP_1$;
cmp. also \cite{bbk23complexity}.
\end{remark}

%\subsection{Wave Equation}\label{ss:Wave}

%%%%%%%%%%%%%%%%%%%%%%%%%%%%%%%%%%%%%%%%%%%%%%%%
\addcontentsline{toc}{section}{References}
\bibliographystyle{plain}
\bibliography{cca,hilbert}
\end{document}